\newtheorem{theorem}{Theorem}
\newtheorem{corollary}[theorem]{Corollary}
\newtheorem{proposition}[theorem]{Proposition}
\newtheorem{lemma}[theorem]{Lemma}
\newtheorem{algorithm}[theorem]{Algorithm}
\def\C{{\mathds C}}
\def\E{{\mathds E}}
\def\1{{\mathds 1}}
\def\M{{\mathds M}}
\def\P{{\mathds{P}}}
\def\V{{\mathds V}\!\!\operatorname{ar}}
\def\R{{\mathds{R}}}
\def\cE{{\mathcal E}}
\def\cL{{\mathcal L}}
\def\cO{{\mathcal O}}
\def\co{o}
\def\d{\delta}
\def\e{\varepsilon}
\def\k{{\kappa}}
\def\l{\lambda}
\def\s{\sigma}
\def\Th{{\Theta}}
\def\tcr{{\rm cr}}
\def\rcr{{\overline{\rm cr}}}
\def\st{{\rm stress}}
\def\cov {{\C \rm{ov}}}
\def\cor {{\C \rm{orr}}}
\def\vol{\l}
\def\surf{\mathrm{surf}}
\title{Crossing Numbers and Stress of Random Graphs}
\author[M. Chimani]{Markus Chimani}
\author[H. Döring]{Hanna Döring}
\author[M. Reitzner]{Matthias Reitzner}
\address{Theoretical CS, Institute of Computer Science, Uni Osnabrück, Germany}
\email{markus.chimani@uni-osnabrueck.de} 
\urladdr{https://orcid.org/0000-0002-4681-5550}
\address{Stochastics, Institute of Mathematics, Uni Osnabrück, Germany}
\email{hanna.doering@uni-osnabrueck.de}
\address{Stochastics, Institute of Mathematics, Uni Osnabrück, Germany}
\email{matthias.reitzner@uni-osnabrueck.de}
\thanks{A preliminary version of this research appeared in Proc.\ Int'l Symp.\ on Graph Drawing and Network Vis.\ (GD 2018), LNCS 11282, pp.\ 255--268.}
\renewcommand{\paragraph}[1]{\subsubsection*{#1}}
\begin{document}

\begin{abstract}
Consider a random geometric graph over a random point process in $\R^d$. Two points are connected 
by an edge if and only if their distance is bounded by a prescribed distance parameter. We show that 
projecting the graph onto a two dimensional plane is expected to yield a constant-factor crossing number (and rectilinear crossing number) 
approximation. We also show that the crossing number is positively correlated to the stress of the graph's projection.

\bigskip

\noindent \textsc{Keywords.} graph crossing number, random geometric graphs, Poisson point process, expectation, stress 
\end{abstract}

\maketitle

\section{Introduction}\label{sec:intro}
An undirected \emph{abstract} graph $G_0$ consists of vertices and edges connecting vertex pairs.
An \emph{injection} of $G_0$ into $\R^d$ is an injective map from the vertices of $G_0$ to $\R^d$,
and edges onto curves between their corresponding end points but not containing any other vertex point. 
%Such an injection in $\R^2$ is called a \emph{drawing}. 
For $d\geq 3$, we may assume that distinct edges do not share any point (other than a common end point).
For $d=2$, we call the injection a \emph{drawing}, and it may be necessary to have points where curves \emph{cross}. A drawing is \emph{good} 
if no pair of edges crosses more than once, nor meets tangentially, and no three edges share the same crossing point.
Given a drawing $D$, we define its crossing number $\tcr(D)$ as the number of points where edges cross. 
The crossing number $\tcr(G_0)$ of the graph itself is the smallest $\tcr(D)$ over all its good drawings~$D$.
We may restrict our attention to the \emph{rectilinear crossing number} $\rcr(G_0)$, where edge curves are straight lines;
note that $\rcr(G_0)\geq \tcr(G_0)$.

The crossing number and its variants have been studied for several decades, see, e.g.,~\cite{S-book}, but
still many questions are widely open. We know the crossing numbers only for very few graph classes; 
already for $\tcr(K_n)$, i.e., for complete graphs with $n$ vertices,
we only have conjectures, and for $\rcr(K_n)$ not even them.
Since deciding $\tcr(G_0)$ is NP-complete~\cite{GJ83} (and $\rcr$ even $\exists\R$-complete~\cite{B91}), several attempts for approximation algorithms
have been undertaken. The problem does not allow a PTAS unless $\mathrm{P}{=}\mathrm{NP}$~\cite{C13}. 
For general graphs, we currently do
not know whether there is an $\alpha$-approximation for any constant~$\alpha$ . However, we can achieve constant ratios for dense graphs~\cite{cit:fox-pach-suk} and for bounded pathwidth graphs~\cite{BCDM17}.
Other strong algorithms deal with graphs of maximum bounded degree and
achieve either slightly sublinear ratios~\cite{Chuzhoy}, or constant ratios for further restrictions such as embeddability on low-genus surfaces~\cite{ghls,torusHS,surfaceApprox}
or a bounded number of graph elements to remove to obtain planarity~\cite{CM11,apex,TighterInsertion,exactmei}.

We will make use of the \emph{crossing lemma}, originally due to~\cite{ACNS82,Leighton}%
\footnote{Incidentally, the lemma allows an intriguingly elegant proof using stochastics~\cite{thebook}.}:
There are constants%
\footnote{The currently best constants $d=7,c=\frac1{20}$ are due to~\cite{deKlerk}.} %, but we are only interest in the asymptotic order here.}
$d\geq 4,c\geq \frac1{64}$ such that any abstract graph $G_0$ on $n$ vertices and $m\geq dn$ edges has $\tcr(G_0)\geq c m^3/n^2$. In particular
for (dense) graphs with $m=\Theta(n^2)$, this yields the asymptotically tight maximum of $\Theta(m^2)$ crossings.

\paragraph{Random Geometric Graphs (RGGs).}
We always consider a \emph{geometric} graph $G$ as input, i.e., an \emph{abstract} graph $G_0$ together with
a straight-line injection into $\R^d$, for some $d\geq 2$; we identify the vertices with their points. 
%, where typically $d$ is much larger than $2$.
For a two-dimensional plane $L$, the postfix operator~$|_L$ denotes the projection onto $L$.

Given a set of points $V$ in $\R^d$, the \emph{unit-ball graph} (\emph{unit-disk graph} if $d=2$) 
is the geometric graph using $V$ as vertices that has an edge between two points iff
balls of radius~1 centered at these points touch or overlap. Thus, points are adjacent iff their distance is~$\leq 2$.
In general, we may use arbitrary threshold distances $\delta>0$.
We are interested in \emph{random} geometric graphs \emph{(RGGs)}, i.e., when using a Poisson point process to obtain $V$ for the above graph class.

\paragraph{Stress.}
When drawing (in particular large) graphs with straight lines in practice, \emph{stress} is a well-known and successful concept, see, e.g.,~\cite{KobHandbook,klimentaBrandes,stresscompare}: 
let $G$ be a geometric graph, $d_0,d_1$ two distance functions on vertex pairs---(at least) the latter of which depends on an injection---and $w$ weights. We have:
\begin{equation}\label{def:stress}
\st(G) := \sum_{v_1, v_2\in V(G), v_1\neq v_2} w(v_1,v_2) \cdot ( d_0(v_1,v_2) - d_1(v_1,v_2) ) ^2.
\end{equation}
In a typical scenario, $G$ is injected into $\R^2$, $d_0$ encodes the graph-theoretic distances (number of edges on the shortest path) 
or some given similarity matrix, and $d_1$ are the Euclidean distances in $\R^2$.
Intuitively, in a drawing of 0 (or low) stress, the vertices' geometric distances $d_1$ are (nearly) identical to their ``desired'' distance according to $d_0$.
A typical weight function $w(v_1,v_2):=d_0(v_1,v_2)^{-2}$ softens the effect of ``bad'' geometric injections for vertices that are far away from each other anyhow.
It has been observed \emph{empirically} that low-stress drawings \emph{tend} to be visually 
pleasing and to have a low number of crossings, see, e.g.,~\cite{crstress,kobourov}. While it may seem worthwhile to approximate the crossing number by minimizing a drawing's stress, 
there is no sound mathematical basis for this approach.

There are different ways to find (close to) minimal-stress drawings in 2D%, see, e.g.,
~\cite{stresscompare}. % for a survey and comparison. 
One way is multidimensional scaling, cf.~\cite{klimentaBrandes}, where we start with an injection of an abstract graph $G_0$ into some high-dimensional
space $\R^d$ and asking for a projection of it onto $\R^2$ with minimal stress. 
It should be understood that Euclidean distances in a unit-ball graph in $\R^d$
by construction closely correspond to the graph-theoretic distances. In fact, for such graphs it seems reasonable to use the distances in $\R^d$ as the given metrics $d_0$, 
and seek an injection into $\R^2$---whose resulting distances form $d_1$---by means of projection.

\paragraph{Contribution.}
We consider RGGs for large $t$ 
and
investigate the mean, variance, and corresponding law of large numbers both for their rectilinear crossing number and their minimal stress when projecting them onto the plane. 
We also prove, for the first time, a positive correlation between these two measures.

While our technical proofs make heavy use of stochastic machinery, 
the consequences are very algorithmic: We give a surprisingly simple
algorithm that yields an \emph{expected constant} approximation ratio for random geometric graphs
even in the pure abstract setting. In fact, we can state
the algorithm already now; the remainder of this paper deals with the proof of its properties and correctness:

\begin{algorithm}
Given a random geometric graph $G$ in $\R^d$,
we pick a \emph{random} two-dimensional plane $L$ in $\R^d$ to obtain a straight-line drawing $G|_L$ by projecting $G$ onto $L$.% that constitutes a crossing number approximation both for $\rcr(G_0)$ and for $\tcr(G_0)$.
\end{algorithm}

Throughout this paper, we prefer to work within the setting of a Poisson point process 
because of the strong mathematical tools that are available in this case. 
It is straightforward to transfer the computations to the binomial setting (which is the case of $n$ uniform random points in $W$) instead of a Poisson point process. As a particular example we compute in Section \ref{sec:dePoissonize} the expected crossing number in this setting.

\section{A Stroll through the Core Results}

In this paper, we combine questions of graph theory with modern approaches from stochastics. As such, we are aware that many
readers typically interested in crossing numbers and related concepts, may not be intrinsically versed in the finer details of the latter. 
Yet, they may want to understand the graph theoretic and algorithmic consequences.
As such, before we go into the details of actually proving our results rigorously, we summarize our findings in this section in a hopefully
comparably light-reading style. 
Some readers, when only interested in the gist of the results and their algorithmic and graph theoretic 
consequences, may find this section sufficient; others may deem it a helpful start to understand the organization of the subsequent sections.
Generally, the results stated within this section will often only describe bounds in terms of asymptotics (using $\mathcal O$-notation); more specific constants
can be found in the detailed results in the sections thereafter.

We always consider \emph{random} geometric graphs that sample the vertex set by a random process. A traditional
approach, in particular in algorithmics, would be to first decide on a number $n$ of vertices, and then sample each point $p$ in $\R^d$ by picking
each of the $d$ coordinates independently and uniformly at random between some upper and lower bounds. For uniform bounds, this sample space is a $d$-dimensional
hypercube, and as such \emph{not rotation invariant}: intuitively, this means that when projecting the high-dimensional point set onto a 2D-plane, 
different angular positions of the plane relative to the coordinate system will, in general, lead to vastly different results. In contrast to this,
we will sometimes also consider a mathematically more well-behaved \emph{rotation invariant} sample spaces: the one where each point is 
picked uniformly at random within a $d$-dimensional hypersphere. We note that algorithmically sampling this space, although not as trivial as the aforementioned process,
is well understood. 

Furthermore, our results are mathematically easier to show for a \emph{Poisson point process} instead of for a predetermined number of vertices $n$.
Thereby, the resulting number of points itself is subject to the random process, and we only prescribe the \emph{expected} number of vertices. 
The intuitive benefit is that at any point in time, we have a non-zero probability of a further point being sampled. 
This will probably become most evident when discussing the computation of variances and covariances, where the so-called \emph{Slivnyak-Mecke formula} plays a crucial role. 
However, see Section~\ref{sec:dePoissonize} for a discussion on the results for ``binomial input''. 

Throughout the paper, all our upper bounds arise from projections of a (potentially high-dimensional) straight-line drawing; they are thus witnesses for $\rcr$.
All our lower bounds are based on the crossing lemma for general drawings, witnessing $\tcr$.
Thus, since $\tcr(G_0)\leq \rcr(G_0)$ for any abstract graph $G_0$, all our results hold for $\tcr$ and $\rcr$ simultaneously.

The first important tool to be used is that of \emph{U-statistics}, to be defined in the subsequent Section~\ref{sec:tools}. Intuitively, it describes properties of measures
that allow for certain strong stochastic machinery to be used. 
We will see that the rectilinear crossing number and the stress of a graph are such U-statistics.
Based thereon, we are then able to derive in Section~\ref{sec:Ecr}:

\begin{theorem}[see Theorem~\ref{th:exp-rcr}]
	Let $G_0$ be the underlying abstract graph of a random geometric graph $G$ in $\R^d$ with $n$ vertices and $m$ edges. 
	For \emph{any} two-dimensional projection plane $L$, we have:
	$$\tcr(G_0)\;\leq\; \rcr(G_0) \;\leq\; \E_V \rcr(G|_L) \;\in\; \Theta\left(\frac{m^3}{n^2}\cdot\left(\frac{m}{n^2}\right)^{\frac{2-d}{d}}\right).$$
\end{theorem}

Observe that the fraction $\frac{m}{n^2}$ is essentially the density of the graph, measured as its ratio w.r.t.\ a complete graph.

Once the expectation is established, it is interesting to understand how close actual randomized computations are
expected to be to this expected value.
In Section~\ref{sec:var-rcr}, Theorem~\ref{th:var-rcr} we will show that the variance $\V_{V}\rcr(G|_L)\ll \E_V\rcr(G|_L)$ and a corresponding
law of large numbers (Corollary~\ref{cor:llnrcr}). This allows us to deduce: 
\begin{corollary}
	The observed crossing number will be very close to $\E_V\rcr(G|_L)$ with high probability.
\end{corollary}	

Combining these two results with the above crossing lemma $\rcr(G_0)\geq \tcr(G_0)\in \Omega({m^3}/{n^2})$, we directly obtain our two central approximation results. They
constitute the first (expected) crossing number approximations for a rich class of randomized graphs:
% \begin{corollary}
% 	Let $G_0$ be the underlying abstract graph of a random geometric graph $G$ in $\R^d$.
% \begin{itemize}
% 	\item If $d=2$: the injection of $G$ is an \emph{expected constant factor approximation} for $\tcr(G)$ and $\rcr(G)$.
% 	\item If $d\geq 3$: picking \emph{any} projection plane $L$ yields
% 	an \emph{expected $\alpha$-approximation} for $\tcr(G)$ and $\rcr(G)$, where $\alpha$ is dependent on the graph density $m/n^2$. If the latter is constant, so is $\alpha$.
% \end{itemize}	
% \end{corollary}
\begin{corollary}
 Let $G$ be a random geometric graph in $\R^2$ (unit-disk graph), and $G_0$ its underlying abstract graph. \emph{With high probability}, the number of crossings in its natural straight-line drawing 
 is at most a \emph{constant factor} away from $\tcr(G_0)$ and $\rcr(G_0)$.
\end{corollary}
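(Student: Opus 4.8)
The plan is to exploit the fact that for $d=2$ the plane $L$ is the whole space, so the projection is vacuous, $G_L=G$, and the ``natural straight-line drawing'' is simply $G$ itself, realizing $\rcr(G)$ crossings. Since this is a valid straight-line drawing of $G_0$ we get the free sandwich $\tcr(G_0)\le\rcr(G_0)\le\rcr(G)$. It therefore suffices to produce a reverse estimate $\rcr(G)\le\alpha\,\tcr(G_0)$ holding with high probability for some constant $\alpha$ independent of $t$: feeding this back into the sandwich bounds \emph{both} ratios $\rcr(G)/\tcr(G_0)$ and $\rcr(G)/\rcr(G_0)$ by $\alpha$ simultaneously.

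For the upper side I would specialize Theorem~\ref{th:exp-rcr} to $d=2$. Here $L^\perp$ is $0$-dimensional, so $(v_1+L^\perp)\cap W$ is the single point $v_1$ and $I^{(2)}(W,L)=\int_W 1\,dv=\vol_2(W)=1$, giving $\E_V\rcr(G)=\tfrac{c_2}{8}\,t^4\d_t^6\,(1+o(1))$. The same Poincar\'e-inequality computation as in Section~\ref{sec:var-rcr} applies for $d=2$ and shows the standard deviation is smaller than the mean by a factor $\Theta(t^{-1/2})$; hence, by Chebyshev's inequality, there is a constant $C_1$ with $\rcr(G)\le C_1\,t^4\d_t^6$ with high probability.

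For the lower side I would invoke the crossing lemma. From~\eqref{eq:edges} we have $\E_V m=\tfrac{\k_2}{2}t^2\d_t^2(1+o(1))$ and $\E_V n=t$, and the classical concentration results for the vertex and edge counts of RGGs give $m=\Th(t^2\d_t^2)$ and $n=\Th(t)$ with high probability. In the dense regime, where $t\d_t^2\to\infty$, the average degree diverges, so the crossing-lemma threshold $m\ge 4n$ is met with high probability and we obtain $\tcr(G_0)\ge c\,m^3/n^2\ge C_2\,t^4\d_t^6$ with high probability---crucially, this matches the \emph{order} of the upper bound, since $m^3/n^2=\Th(t^4\d_t^6)$.

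Intersecting the finitely many high-probability events above yields $\rcr(G)\le C_1 t^4\d_t^6\le (C_1/C_2)\,\tcr(G_0)$, so $\alpha=C_1/C_2$ works. \textbf{The main obstacle} I anticipate is not any single estimate but pinning down the regime: one must restrict to the dense regime (degree tending to infinity, e.g.\ $\d_t\to0$ slowly enough that $t\d_t^2\to\infty$) so that $m\ge 4n$ holds with high probability---the crossing lemma is simply false for sparse drawings---and one must track the crossing-lemma constant $c$ together with $c_2$ through the two asymptotics so that $C_1/C_2$ is a genuine $t$-independent constant. De-Poissonization, as remarked in the introduction, then transfers the statement to the uniform model with the same constants.
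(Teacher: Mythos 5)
Your argument is correct and is essentially the paper's own: the corollary is proved only implicitly there, via the remark after Theorem~\ref{th:exp-rcr} that for $d=2$ the expectation $\frac{c_2}{8}t^4\d_t^6$ is of order $\Theta(m^3/n^2)$ and hence asymptotically optimal by the crossing lemma, combined with the Chebyshev concentration coming from the variance bounds of Section~\ref{sec:var-rcr}. Your only deviation is to insist on the dense regime $t\d_t^2\to\infty$, which is slightly more restrictive than the paper's standing assumption $t\d_t^2\geq c$ with $c$ large enough to guarantee $m\geq 4n$ with high probability; either suffices.
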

\begin{corollary}
 Let $G$ be a random geometric graph in $\R^d$ (unit-ball graph, for some constant $d\geq 3$), and $G_0$ its underlying abstract graph. We obtain a straight-line drawing $D$ by projecting it onto a randomly chosen two-dimensional plane.
 \emph{With high probability}, the number of crossings in $D$ is at most a \emph{factor} $\alpha$ away from $\tcr(G_0)$ and $\rcr(G_0)$. Thereby, $\alpha$ is only dependent on the graph's density.
 If the latter is constant, so is $\alpha$.
\end{corollary}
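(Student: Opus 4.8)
The plan is to sandwich the number of crossings of the random straight-line drawing $D=G_L$ between the true optimum $\tcr(G_0)$ and a multiple of it whose factor depends on $t$ and $\d_t$ only through the edge density $m/n^2$. Three ingredients suffice, all already available: the trivial inequality $\rcr(G_L)\ge\rcr(G_0)\ge\tcr(G_0)$, a crossing-lemma lower bound on $\tcr(G_0)$, and the concentration of $\rcr(G_L)$ from Theorems~\ref{th:exp-rcr} and~\ref{th:var-rcr}.

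First I would record that a random plane $L$ almost surely sends the finitely many distinct vertices to distinct points and produces no tangencies or triple crossings, so $D=G_L$ is almost surely a good drawing whose crossing count equals $\rcr(G_L)$, and $\rcr(G_L)\ge\rcr(G_0)\ge\tcr(G_0)$ holds deterministically. For the matching lower bound on $\tcr(G_0)$ I would turn the (deterministic) crossing lemma into a high-probability statement: since $n$ and $m$ are U-statistics with $\E_V n=t$ and $\E_V m=\tfrac{\k_d}{2}t^2\d_t^{d}(1+o(1))$, and both concentrate by the central limit theorems and concentration inequalities known for RGGs, with high probability $n\le 2t$ and $m\ge\tfrac{\k_d}{4}t^2\d_t^{d}$. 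In the regime $t\d_t^{d}\ge c$ with $c$ large enough (in particular in the dense regime $t\d_t^{d}\to\infty$) this forces the crossing-lemma hypothesis $m\ge 4n$ with high probability, whence $\tcr(G_0)\ge c_{\mathrm{CL}}\,m^3/n^2\ge c'\,t^4\d_t^{3d}$ with high probability, for a constant $c'$ depending only on $\k_d$ and the crossing-lemma constant.

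Next I would use the upper bound. By the Law of Large Numbers corollary following Theorem~\ref{th:var-rcr} (equivalently, Chebyshev applied to the mean of Theorem~\ref{th:exp-rcr} and the variance of Theorem~\ref{th:var-rcr}, whose standard deviation is smaller than the mean by the factor $t^{-1/2}$), for the fixed plane $L$ one has $\rcr(G_L)\le\tfrac14 c_d\,I^{(2)}(W,L)\,t^4\d_t^{2d+2}$ with high probability. On the intersection of this event with the event of the previous paragraph---still of probability $1-o(1)$---dividing the two estimates gives
$$
\frac{\text{crossings in }D}{\tcr(G_0)}=\frac{\rcr(G_L)}{\tcr(G_0)}\le\frac{\tfrac14 c_d\,I^{(2)}(W,L)\,t^4\d_t^{2d+2}}{c'\,t^4\d_t^{3d}}=C\,\d_t^{2-d},
$$
and the same bound holds with $\rcr(G_0)$ in place of $\tcr(G_0)$. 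Finally I would rewrite $\d_t^{2-d}=(\d_t^{d})^{(2-d)/d}=\Theta\bigl((m/n^2)^{(2-d)/d}\bigr)$, so the approximation factor is $\alpha:=C\,(m/n^2)^{(2-d)/d}$: the dependence on $t$ and $\d_t$ enters only through the density $m/n^2$, while $C$ depends only on the dimension, the crossing-lemma constant, and the shape $W$ (through $I^{(2)}(W,L)$). For $d=2$ the exponent vanishes and $\alpha$ is an absolute constant, recovering the unit-disk statement.

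The one genuinely delicate point is the simultaneous control of the correlated quantities $\tcr(G_0)$ and $\rcr(G_L)$: the crossing lemma only supplies the random lower bound $c_{\mathrm{CL}}m^3/n^2$, so the main work is upgrading it to hold with high probability via the concentration of both $m$ and $n$, and verifying that its degree hypothesis $m\ge 4n$ is met with high probability in the regime $t\d_t^{d}\ge c$. Everything else is bookkeeping: intersecting two events each of probability $1-o(1)$ and converting the $\d_t$-exponent into the density $m/n^2$.
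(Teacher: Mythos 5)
Your proposal is correct and follows exactly the route the paper intends (the paper only sketches this corollary as a summary of its earlier results): combine the trivial bound $\rcr(G_L)\ge\rcr(G_0)\ge\tcr(G_0)$, the crossing-lemma lower bound $\tcr(G_0)=\Omega(m^3/n^2)=\Omega(t^4\d_t^{3d})$ made high-probability via concentration of $m$ and $n$, and the concentration of $\rcr(G_L)$ around its mean $\Theta(t^4\d_t^{2d+2})$ from Theorems~\ref{th:exp-rcr} and~\ref{th:var-rcr}, yielding the ratio $\Theta(\d_t^{2-d})=\Theta\bigl((m/n^2)^{(2-d)/d}\bigr)$, exactly the density-dependent factor the paper identifies after Theorem~\ref{th:exp-rcr}. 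The only detail worth noting is that for the statement over a \emph{random} plane $L$ one should observe that $I^{(2)}(W,L)$ and $I^{(3)}(W,L)$ are bounded above and below uniformly in $L$ for a fixed convex body $W$, so your fixed-$L$ Chebyshev bound holds uniformly; this is routine and does not affect correctness.
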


In Section~\ref{sec:isotropic-rcr} we turn our attention to asking whether it may be beneficial to not pick some arbitrary fixed projection plane, but to choose the plane randomly.
This leads to the question whether we can find a particularly good projection plane by trying several ones at random.
For this question, we restrict ourselves to rotation invariant sampling spaces. Considering non-invariant spaces does not seem futile, but would come at the cost
of very tedious and lengthy computations, without us learning much more than from the invariant case. The obvious benefit of the latter is that
$\V_L \E_V \rcr(G|_L)=0$, i.e., all planes have the same chance of being ``good''.

In Theorem~\ref{th:isotropic-rcr} we show that, when picking $L$ in the invariant setting, both the expected rectilinear crossing number and its variance are of the same
asymptotic order as for the previous case, but now with better constants. We again yield a corresponding law of large numbers (Corollary~\ref{cor:llninv}). However, the most interesting result here is:
\begin{theorem}[see Theorem \ref{thm:manytries}]
	The chance of picking a plane that yields a significantly better crossing number for an $n$-vertex RGG than the expected value is in $\cO(1/\exp({\sqrt[4]t}))$ where $\exp$ is an exponentially growing function.
\end{theorem}
Thus, we are expected to require an exponential number of tries in order to find a particularly ``good'' projection plane, i.e., one that yields significantly fewer crossings than any random plane.
This complexity results suggests that it may be worthwhile to combine the search for an optimal projection plane with other techniques.

To this end, we turn our attention to the stress of RRGs: as discussed in Section~\ref{sec:intro}, one way of obtaining stress-minimal drawings is to use multi-dimensional scaling---a close relative
to our projection problem.
In Section~\ref{sec:stress} we show expectation and variance for stress (Theorem~\ref{th:stress}) using analogous techniques as above. 
In this context, however, we do \emph{not} consider the graph theoretic distance as our target metric $d_0$, but the distances in the high-dimensional injection intrinsic to $G$.
We then use these results in Section~\ref{sec:correlation} to, for the first time, prove a strictly positive correlation between the optimal stress of a graph and its crossing number:

\begin{theorem}[see Theorem \ref{thm:cov} and Corollary \ref{cor:corr-VL-isotropic}]
Let $G$ be a random geometric graph $G$, not neccessarily in the rotation invariant case, and $L$ a two-dimensional plane that may or may not be chosen at random.
Then the correlation between the expected rectilinear crossing number and the expected stress is positive (as the number of vertices tends to infinity).
\end{theorem}

Loosely speaking, this tells us that optimizing the stress of a graph drawing, we are---in expectation---at the same time automatically optimizing for the crossing number. 

We need to be careful in understanding the above scenarios, however: Our results always assume a random input graph. There are, in fact, specific graphs that may arise in 
the random process (sporadically), for which there is neither a reasonable projection (w.r.t.\ its crossing number), nor do they exhibit a positive correlation between 
crossing number and stress.

\section{Notations and Tools from Stochastic Geometry}\label{sec:tools}
Let $W \subset \R^d$ be a convex set of volume $\vol_d(W)=1$. 
Choose a Poisson distributed random 
variable $n$ with parameter $t$, i.e., $\E n =t$.
Next choose $n$ points $V=\{v_1, \dots, v_n\}$ independently in $W$ according to the uniform 
distribution. 
Those points form a Poisson point process $V$ in $W$ of intensity~$t$.
A Poisson point process has several nice properties, e.g., for disjoint subsets $A,B\subset W$, 
the sets $V\cap A$ and $V\cap B$ are independent (thus also their size is 
independent). 
Let $V^k_{\neq}$, $k\geq 1$, be the set of all ordered $k$-tuples over $V$ with pairwise distinct elements.
We will consider $V$ as the vertex set of a geometric graph $G$
for the distances parameter $(\d_t)_{t>0}$ with edges $E=\{ \{u,v\} \mid u,v\in V, u\neq v,
\|u-v\| \leq \delta_t \}$, i.e.,
we have an edge between two distinct points if
and only if their distance is at most $\d_t$.
Such \emph{random geometric graphs (RGG)} have been extensively investigated, 
see, e.g., \cite{Penrose03, CTMRMS}, but nothing is known
about the stress or crossing number of its underlying abstract graph~$G_0$.

A \emph{U-statistic} $U_k(f):=\sum_{\mathbf{v}\in V^k_{\neq}} f(\mathbf{v})$ 
is the sum over $f(\mathbf{v})$ for all
$k$-tuples~$\mathbf{v}$. Here, $f$ is a measurable non-negative real-valued function, and 
$f(\mathbf{v})$ only depends on
$\mathbf{v}$ and is independent of the rest of $V$.
The number of edges in $G$ is a U-statistic as 
$m=\frac12 \sum_{v,u\in V, v\neq u} \1(\| v-u\| \leq \d_t)$. Likewise, the stress of a geometric graph as well as the crossing number of a 
straight-line drawing is a U-statistic, using 2- and 4-tuples of $V$, respectively.
The well-known multivariate Slivnyak-Mecke formula, which we use here in a simple form, tells us how to compute the
expectation $\E_V$ over all realizations of the Poisson process $V$; for U-statistics we have, 
see~\cite[Cor.~3.2.3]{SW}:
\begin{equation}\label{eq:Mecke}
\E_{V} \sum_{(v_1,\ldots,v_k)\in V^k_{\neq}}f(v_1,\ldots,v_k) 
= 
t^k \int\limits_{W^k} 
f(v_1,\ldots,v_k)\,dv_1\cdots dv_k.
\end{equation}
%Recall that $V$ has intensity $t$.
We already know $ \E_V n= \E_V | V| = t$. Solving the above formula for the expected number of 
edges, we obtain
\begin{equation}\label{eq:edges}
\E_V m= \E_V | E | = \frac {\k_d}{2}\, t^2 \d_t^{d} + \cO(t^2 \d_t^{d+1}\, \surf(W)),
\end{equation}
where $\k_d=\vol_d(B_d)$ is the volume of the unit ball $B_d$ in $\R^d$, and $\surf(W)$ the surface 
area of $W$.
For $n$ and $m$, central limit theorems and concentration inequalities are well known as $t \to 
\infty$, see, e.g., 
\cite{Penrose03,CTMRMS}. 

The expected degree $\E_V \deg(v)$ of a typical vertex $v$ is
approximately of order $\kappa_d\, t\, \d_t^d$ (this can be made precise using Palm 
distributions). This naturally leads to three different asymptotic \emph{regimes} as introduced in 
Penrose's book~\cite{Penrose03}:
\begin{itemize}
	\item in the \emph{sparse regime} we have
	$\lim_{t\to\infty}t \, \d_t^d=0$, thus $\E_V \deg(v)$ tends to zero;
	\item in the \emph{thermodynamic regime} we have
	$\lim_{t\to\infty}t \, \d_t^d=c >0 $, thus $\E_V \deg(v)$ is asymptotically constant;
	\item in the \emph{dense regime} we have
	$ \lim_{t\to\infty}t \, \d_t^d=\infty $,
	thus $\E_V \deg(v)\to\infty$.
\end{itemize}
Observe that in standard graph theoretic terms, the \emph{thermodynamic regime} leads to 
\emph{sparse graphs}, i.e., via~\eqref{eq:edges} we obtain
$\E_V m 
%= \frac {\vol_d}{2}\, t^2 \d_t^{d} (1+\co(1))
= \Th(t) = \Th(\E_V n)$. 
Similarly, the dense regime---together with $\d_t \to c$---leads to \emph{dense graphs}, i.e., 
$\E_V m 
%= \frac {\vol_d}{2}\, t^2 \d_t^{d} (1+\co(1))
= \Th(t^2) = \Th((\E_V n)^2)$.
Recall that to employ the crossing lemma, we want $m \geq 4n$. 
Also, the lemma already shows that any good (straight-line) drawing of a dense graph $G_0$ already 
gives a constant-factor approximation for $\tcr(G_0)$ (and $\rcr(G_0)$). 
In the following we thus assume a constant $0<c \leq t\,\d_t^{d}$ and $\d_t\to 0$, i.e., $m=\co(n^2)$.

The Slivnyak-Mecke formula is a classical tool to compute expectations 
and will thus be used extensively throughout this paper. 
Yet, suitable methods to compute variances came up only recently. They emerged in 
connection with the development of the Malliavin calculus for Poisson point processes~\cite{LastPenrose11, PR16}. This is due to the fact that the Wiener-It\^{o} chaos expansion is particularly well-behaved for Poisson U-statistics~\cite{MRMS}.
An important operator for functions $g(V)$ of Poisson point processes is the 
\emph{difference} (also called \emph{add-one-cost}) operator,
$$ D_v g(V) := g(V \cup \{v\})-g(V),$$
which considers the change in the function value when adding a single further point~$v$.

\section{Rectilinear Crossing Number of an RGG}
Let $\cL$ be the set of all two-dimensional linear planes and 
$L \in \cL$ be a random plane chosen according to a (uniform) Haar 
probability measure on $\cL$.
The drawing  $G_L:= G|_L$ is the projection of $G$ onto $L$.
Let $[u,v]$ denote the segment between vertex points $u,v\in V$. If their distance 
has to be at most $\delta_t$, this condition is explicitly given by an indicator function or the 
domain of the integral.
The rectilinear crossing number of $G_L$ is a U-statistic of order~$4$:
\begin{equation*}
 \rcr (G_L) =
\frac 18 \! \sum_{(v_1, v_2, v_3, v_4) \in V^4_{\neq}} \!\!\! \1([v_1,v_2]|_L\cap [v_3,v_4]|_L \neq 
\emptyset,\, \| v_1- v_2\| \leq \delta_t,\| v_3-v_4\| \leq \delta_t ).
\end{equation*}
Keep in mind that even for the best possible projection we only 
obtain
$\min_{L\in\cL} \rcr(G_L) \geq \rcr(G_0)$.
Analyzing $\E_{V} \min_{L\in\cL} \rcr(G_L) $ is more complicated than $\E_{L,V} \rcr(G_L)$; 
fortunately, we
will not require it.

\subsection{Calculations of some Geometric Integrals} % oder {Towards the Expectation of Crossing Number} 
The following integrals show up in several 
calculations and play an essential role in determining the asymptotic behavior of the moments of 
$\rcr(G_L)$.
\begin{lemma}\label{le:J1K}
Let $K$ be a convex body and $L$ a 2-dimensional plane in $\R^d$. Define
$$
J^{(1)}_L (K)
:=
\int\limits_{ (\d_t B_d ) \times K \times
(\d_t B_d)} 
\1([0,x ]\vert_L \cap (y+[0,z])\vert_L \neq \emptyset)
\, dz\, dy\, dx
. $$
Then, \quad
$ J^{(1)}_L (K) =
c_d \, \d_t^{2d+2} \vol_{d-2} (L^\perp \cap K) \ (1+\co(1)) 
$ \quad
as $\d_t \to 0$, with 
\begin{equation}\label{def:cd1}
c_d =
8 \pi \kappa_{d-2}^2  \, {\bf B}\!\left( 3,\frac d2 \right)^2
\end{equation}
for ${\bf B}$ being the beta function.
Furthermore, we have 
$ J^{(1)}_L (K)  \leq c_d \d_t^{2d+2} M_{d-2} (K)  $ 
where $M_{d-2}(K)$ is the volume of the maximal $(d-2)$-dimensional section of $K$. 
\end{lemma}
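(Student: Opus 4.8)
The plan is to reduce $J^{(1)}_L(K)$ to the product of a scale-free constant and a $(d-2)$-dimensional geometric factor by a two-stage rescaling, and then to read off the asymptotic and both upper bounds from the resulting identity. First I would substitute $x=\d_t\tilde{x}$ and $z=\d_t\tilde{z}$ with $\tilde{x},\tilde{z}\in B_d$, which contributes a Jacobian factor $\d_t^{2d}$. The decisive observation is that $\1([0,x]\vert_L\cap(y+[0,z])\vert_L\neq\emptyset)$ sees only the projection onto $L$, hence depends on $y$ \emph{only} through its component $y^L\in L$: writing $y=y^L+y^\perp$ with $y^\perp\in L^\perp$, the inner integral over $y^\perp$ simply returns the section volume $f(y^L):=\vol_{d-2}\big((y^L+L^\perp)\cap K\big)$, extended by zero outside $K\vert_L$. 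This cleanly separates the $(d-2)$-dimensional geometry of $K$ from the planar intersection problem inside $L$.

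Next I would rescale the planar variable $y^L=\d_t\tilde{y}^L$; this is the crucial move, since $\dim L=2$ supplies the missing factor $\d_t^2$, and after pulling the common $\d_t$ out of both projected segments one is left with a scale-free condition. Collecting all Jacobians gives
$$\frac{J^{(1)}_L(K)}{\d_t^{2d+2}}=\int_{B_d}\int_{B_d}\int_{L}\1\big([0,\tilde{x}\vert_L]\cap(\tilde{y}^L+[0,\tilde{z}\vert_L])\neq\emptyset\big)\,f(\d_t\tilde{y}^L)\,d\tilde{y}^L\,d\tilde{z}\,d\tilde{x}.$$
Since each projected segment has length at most $1$, the indicator forces $\|\tilde{y}^L\|\leq 2$, so the $\tilde{y}^L$-domain is effectively $2B_2$.

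From this single identity both parts follow. For the non-asymptotic bound I would use $f(y^L)\leq M_{d-2}(K)$ for \emph{every} $y^L$, which immediately yields $J^{(1)}_L(K)\leq c_d\,\d_t^{2d+2}M_{d-2}(K)$ with $c_d$ exactly the integral in \eqref{def:cd1}. For the asymptotic I would apply dominated convergence with the integrable dominating function $\1(\cdots)\,M_{d-2}(K)$ on the bounded domain $B_d\times B_d\times 2B_2$, together with continuity of $f$ at the origin: for a convex body $K$ Brunn--Minkowski makes $f^{1/(d-2)}$ concave, hence $f$ is continuous at $0$ whenever $0\in\mathrm{int}(K\vert_L)$. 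Thus $f(\d_t\tilde{y}^L)\to f(0)=\vol_{d-2}(L^\perp\cap K)$, and the remaining integral is precisely $c_d$, giving $J^{(1)}_L(K)=c_d\,\d_t^{2d+2}\vol_{d-2}(L^\perp\cap K)(1+o(1))$. It only remains to identify and bound $c_d$: for fixed $\tilde{x},\tilde{z}$ the set of $\tilde{y}^L$ making the two planar segments meet is exactly the parallelogram $\{s\,\tilde{x}\vert_L-t\,\tilde{z}\vert_L:s,t\in[0,1]\}$, of area $|\det(\tilde{x}\vert_L,\tilde{z}\vert_L)|$, whence $c_d=\int_{B_d}\int_{B_d}|\det(\tilde{x}\vert_L,\tilde{z}\vert_L)|\,d\tilde{z}\,d\tilde{x}$, and the crude estimate $|\det(\tilde{x}\vert_L,\tilde{z}\vert_L)|\leq\|\tilde{x}\vert_L\|\,\|\tilde{z}\vert_L\|\leq 1$ gives $c_d\leq\k_d^2\leq 2\pi\k_d^2$.

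I expect the main obstacle to be the analytic control around the collapsing slice: establishing continuity of $f$ at $0$ from convexity and verifying that the rescaled projection $\tfrac1{\d_t}(K\vert_L)$ eventually contains $2B_2$, so that the $L$-slice collapses onto the origin without losing mass or picking up boundary contributions. When $0\in\partial(K\vert_L)$ this must be handled separately, but it affects only a lower-order set and is absorbed into the $o(1)$; everything else is bookkeeping of Jacobians and the elementary parallelogram-area computation.
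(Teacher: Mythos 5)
Your proof is correct and follows essentially the same route as the paper's: Fubini over $L^\perp$ to extract the section volume $\vol_{d-2}((y^L+L^\perp)\cap K)$, rescaling $x,z$ by $\d_t$ in $\R^d$ and $y^L$ by $\d_t$ in the two-dimensional plane $L$ to produce the factor $\d_t^{2d+2}$, the observation that the indicator confines $y^L$ to $2\d_t B_2$, the sup bound for the $M_{d-2}(K)$ estimate, and convergence of the section function at the origin for the asymptotics. The one place you genuinely diverge is the estimate of $c_d$: the paper uses a symmetry argument (the indicator vanishes on at least half of $2B_2$) to get $c_d\leq 2\pi\k_d^2$, whereas you identify the admissible set of $y^L$ exactly as the parallelogram $\{s\,x\vert_L-t\,z\vert_L\}$ of area $|\det(x\vert_L,z\vert_L)|\leq 1$, which yields the cleaner identity $c_d=\int_{B_d^2}|\det(x\vert_L,z\vert_L)|\,dz\,dx$ and the sharper bound $c_d\leq\k_d^2$; this of course still implies the stated inequality. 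Your explicit appeal to Brunn--Minkowski for continuity of the section function at $0$ makes precise a step the paper leaves implicit (and which, as you note, is the reason the paper later works with the inner parallel body $W_{-\d_t}$ when applying the lemma).
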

\begin{proof}
We write $y=(y^L, y^{L^\perp})$ with $y^L \in L$, $y^{L^\perp} \in L^\perp$. 
Clearly, for $[y,z]$ to meet $[0,x]$ we need that $y$ is at least 
contained in a cylinder of radius $2 \d_t$ above the origin, $y^L \in 2 \d_t B_2 \subset L$. Using 
Fubini's theorem we obtain
\begin{align*}
J&^{(1)}_L (K)  
=
\int\limits_{ (\d_t B_d ) \times K \times
(\d_t B_d)} 
\1([0,x ]\vert_L \cap (y+ [0,z])\vert_L \neq \emptyset)
\, dz\, dy\, dx
\\
&=
\int\limits_{\mathclap{\substack{(\d_t B_d ) \times\\ (2 \d_t B_2 \cap K\vert_ L)  \times\\
(\d_t B_d)}}} 
\1([0,x\vert_L ] \cap (y^L{+}[0,z\vert_L]) \neq \emptyset) 
%\\
%& \hspace{4cm} 
\int\limits_{ L^\perp } \1((y^L, y^{L^\perp})\in K) \, dy^{L^\perp} \ 
\, dz\, dy^L\, dx
\\
&=
\int\limits_{\mathclap{\substack{ (\d_t B_d ) \times\\ (2 \d_t B_2 \cap K\vert_ L)  \times\\
(\d_t B_d)}}} 
\1([0,x\vert_L ] \cap (y^L{+}[0,z\vert_L]) \neq \emptyset)  
%\\
%& \hspace{4cm} 
\cdot \vol_{d-2} ((y^L{+}L^\perp ) \cap K)\ 
dz\, dy^L\, dx
\\ %[1ex]
&\leq
\d_t^{2d+2} 
\int\limits_{\mathclap{\substack{ B_d \times\\ (2 B_2 )  \times\\ B_d} }}
\1([0,x\vert_L ] \cap (y^L{+}[0,z\vert_L]) \neq \emptyset)  
%\\[1ex]
%& \hspace{4cm}
\cdot \vol_{d-2} ((\d_t y^L{+}L^\perp ) \cap K)\ 
dz\, dy^L\, dx
\\ %[1ex]
&\leq
c_d \d_t^{2d+2} \max_{u \in \d_t B_d } \vol_{d-2} ((u +L^\perp ) \cap K)
\end{align*}
with 
\begin{align*}
c&_d 
=
\int\limits_{ B_d \times (2 B_2 )  \times B_d} 
\1([0,x\vert_L ] \cap (y^L + [0,z\vert_L]) \neq \emptyset) 
dz\, dy^L\, dx 
\\ &=
\int\limits_{2 B_2 } 
\int\limits_{ B_{d}^2} 
\1([0,x^L ] \cap (y^L + [0,z^L]) \neq \emptyset) 
dz^L\,dz^{L^\perp}\,  dx^L \,dx^{L^\perp} \, dy^L 
\\ &=
\int\limits_{ B_{2}^2} 
\int\limits_{2 B_2 } 
\1(y^L \in [0,x^L ] + [0,-z^L]) \, dy^L 
\kappa_{d-2}^2 ( 1- \|z^L\|)^{\frac {d-2}2} (1- \|x^L\|)^{\frac {d-2}2}
dz^L\,  dx^L  
\\ &=
\kappa_{d-2}^2  \int\limits_{ B_{2}^2} 
V_2( [0,x^L ] + [0,-z^L]) 
(1- \|z^L\|^{\frac {d-2}2} (1- \|x^L\|)^{\frac {d-2}2}
dz^L\,  dx^L .
\end{align*}
Changing to polar coordinates gives
\begin{align*}
c_d &=
\kappa_{d-2}^2  \int\limits_{ S_1^2}\int\limits_{ [0,1]^2} 
|{\rm det} (r_1 u_1 , r_2 u_2) |
(1- r_1)^{\frac {d-2}2} (1- r_2)^{\frac {d-2}2}
r_1 r_2 \, dr_1 \, dr_2 \, du_1 \, du_2  
\\ &=
2 \pi \kappa_{d-2}^2  \, {\bf B}\!\left( 3,\frac d2 \right)^2
\int\limits_{ S_1}
|{\rm det} ( e_1 , u_2)| \, du_2  
\\ &=
2 \pi \kappa_{d-2}^2  \, {\bf B}\!\left( 3,\frac d2 \right)^2
\int\limits_0^{2 \pi}
|\sin \alpha| \, d\alpha   
=
8 \pi \kappa_{d-2}^2  \, {\bf B}\!\left( 3,\frac d2 \right)^2,
\end{align*}
with ${\bf B}(x,y)=\int_0^1 r^{x-1} (1-r)^{y-1} dr$ being the beta function.
%
%\noindent
Analogously, 
$$
J^{(1)}_L (K)
\geq
c_d \d_t^{2d+2}  \1( 2 \d_t B_2 \subset K\vert_L)
\min_{u \in \d_t B_d } \vol_{d-2} ((u +L^\perp ) \cap K).
$$ 
This finishes the proof of the lemma.
\end{proof}

%-------------------------------
%\noindent
By $K_{-\d}$ we denote the inner parallel set $\{x\colon (x +\d B_d) \subset K\}$ of a 
convex set 
$K$.
\begin{proposition}\label{prop:Int-rcr}
Let $v$ be a point in $W\subset \R^d$. Then for
\begin{equation}\label{defI_W(v)}
I^{(1)}_{W,L}(v)
=
\int\limits_{W^3} \1([v,v_2]\vert_L \cap [v_3,v_4]\vert_L \neq \emptyset,
\, \| v- v_2\| \leq \delta_t,\| v_3-v_4\| \leq \delta_t) \, dv_2 dv_3 dv_4
\end{equation}
it holds that
\begin{equation}\label{eq:lim-IW}
\lim_{\d_t \to 0} \ \frac{I^{(1)}_{W,L}(v)}{\d_t^{2d+2} } 
= c_d 
\vol_{d-2} ((v +L^\perp ) \cap W) 
\end{equation}
as $\d_t \to 0$, with $c_d$ given in \eqref{def:cd1}.
Further we have \quad
$ I^{(1)}_{W,L}(v) \leq c_d \d_t^{2d+2} M_{d-2} (W)  $ \quad
where $M_{d-2}(W)$ is the volume of the maximal $(d-2)$-dimensional section of $W$.  
\end{proposition}

\begin{proof}
We substitute $v_2 = v + x $, $v_3 = v+ y$ and $v_4 = v+ y+z $ and obtain
\begin{eqnarray*}
I^{(1)}_{W,L}(v)
%&=& \int\limits_{W^3} \1([v,v_2]\vert_L \cap [v_3,v_4]\vert_L \neq \emptyset)
%		\1(\| v- v_2\| \leq \delta_t,\ \| v_3-v_4\| \leq \delta_t) \, dv_2 dv_3 dv_4
&=& 
\! \int\limits_{(W-v)^2 }  \int\limits_{W-v-y} \!
\1([0,x ]\vert_L \cap [y,y+z]\vert_L \neq \emptyset,
%\\ && \hskip2cm 
\| x \| \leq \d_t, \| z \| \leq \d_t) \, dz dy dx
\\ &=& 
\int\limits_{ ((W-v)\cap \d_t B_d ) \times (W-v) \times
((W-v-y)\cap \d_t B_d)} \hspace{-1,5cm}
\1([0,x ]\vert_L \cap [y,y+z]\vert_L \neq \emptyset)
\, dz\, dy\, dx
\\ &\leq & 
\int\limits_{ (\d_t B_d ) \times (W-v) \times (\d_t B_d)} 
\1([0,x ]\vert_L \cap [y,y+z]\vert_L \neq \emptyset)
\, dz\, dy\, dx
\\ &= & 
J_L^{(1)} (W-v) 
\end{eqnarray*}
and on the other hand
\begin{eqnarray*}
I^{(1)}_{W,L}(v)
&\geq & 
\int\limits_{ (\d_t B_d ) \times (W-v) \times
(\d_t B_d)} 
\1([0,x ]\vert_L \cap [y,y+z]\vert_L \neq \emptyset)
\\ &&
\hskip2.2cm  \1(\d_t B_d \subset (W-v)) \1(\d_t B_d \subset (W-v-y))
\, dz\, dy\, dx
\\[2ex] &=&
\1(v \subset W_{- \d_t}) 
\int\limits_{ (\d_t B_d ) \times (W_{- \d_t}-v) \times
(\d_t B_d)}\hspace{-1cm} 
\1([0,x ]\vert_L \cap [y,y+z]\vert_L \neq \emptyset)
\, dz\, dy\, dx 
\\ &=&
J_L^{(1)} (W_{-\d_t}-v)\, .
\end{eqnarray*}
Using Lemma \ref{le:J1K} this leads to
\begin{align*}
\lim_{\d_t \to 0} \frac{J_L^{(1)} (W-v)}{\d_t^{2d+2} }
= \lim_{\d_t \to 0} \frac{J_L^{(1)} (W_{-\d_t}-v)}{\d_t^{2d+2} }
&=
\lim_{\d_t \to 0} \frac{I^{(1)}_{W,L}(v)}{\d_t^{2d+2} }
\\ &=
c_d \vol_{d-2} (L^\perp \cap (W-v)).
\end{align*}
\end{proof}

Analogously we prepare the following integral which will show up in the calculation of the variance.
\begin{lemma}\label{le:J2K}
Let $K$ be a convex body and $L$ a 2-dimensional plane in $\R^d$. Define
\begin{multline*}
J^{(2)}_L (K)
:=
\int\limits_{ (\d_t B_d)  \times K^2 \times (\d_t B_d)^2} 
\1([0,x ]\vert_L \cap (y_1+[0,z_1])\vert_L \neq \emptyset)  \cdot 
\\
 \1([0,x ]\vert_L \cap (y_2+[0,z_2])\vert_L \neq \emptyset)
\, dz_1\, dz_2\, \, dy_1\, dy_2\, dx
.
\end{multline*}
Then, \quad 
$ J^{(2)}_L (K) =
c'_d \, \d_t^{3d+4} \vol_{d-2} (L^\perp \cap K)^2 \ (1+\co(1)) 
$ \quad
as $\d_t \to 0$, with 
\begin{eqnarray}\label{def:cd2}
c'_d 
&=&
\pi \kappa_{d-2}^3  \, {\bf B}\!\left( 3,\frac d2 \right)^2  \, {\bf B}\!\left( 4,\frac d2 \right) .
\end{eqnarray}
Furthermore, we have \quad
$ J^{(2)}_L (K)  \leq c'_d \d_t^{3d+4} M_{d-2} (K)^2  .$
\end{lemma}
\begin{proof}(similar to above)
Writing $y_i=(y_i^L, y_i^{L^\perp})$ with $y_1 \in L$, $y_2 \in L^\perp$ 
we obtain
\begin{align*}
J&^{(2)}_L (K)  
= \nonumber
  \int\limits_{ \d_t B_d  \times K^2 \times (\d_t B_d)^2} 
      \1([0,x ]\vert_L \cap (y_1+[0,z_1])\vert_L \neq \emptyset)\\ 
&\hskip3cm
      \1([0,x ]\vert_L \cap (y_2+[0,z_2])\vert_L \neq \emptyset)
      \, dz_1\, dy_1\, \, dz_2\, dy_2\, dx_1
\\[1ex] &=
  \int\limits_{ (\d_t B_d ) \times (2 \d_t B_2 \cap K\vert_ L)^2  \times (\d_t B_d)^2} 
      \1([0,x\vert_L ] \cap (y_1^L + [0,z_1\vert_L]) \neq \emptyset) 
\\ & \hskip3cm 
      \vol_{d-2} ((y_1^L+L^\perp ) \cap K)\ \1([0,x\vert_L ] \cap (y_2^L + [0,z_2\vert_L]) \neq 
\emptyset) 
\\ & \hskip3cm 
      \vol_{d-2} ((y_2^L+L^\perp ) \cap K)\ 
      dz_1\, dy_1^L\, dz_2\, dy_2^L\,  dx
\nonumber\\[1ex]
&\leq
\d_t^{3d+4} \!\!
\int\limits_{ B_d \times (2 B_2 )^2  \times B_d^2}  \!\!
      \1([0,x\vert_L ] \cap (y_1^L + [0,z_1\vert_L]) \neq \emptyset) 
      \vol_{d-2} ((y_1^L+L^\perp ) \cap K) 
\\ & \hskip3cm 
      {\1}([0,x\vert_L ] \cap (y_2^L + [0,z_2\vert_L]) \neq \emptyset) 
      \vol_{d-2} ((y_2^L+L^\perp ) \cap K)\ 
\\ & \hskip3cm 
      dz_1\, dy_1^L\, dz_2\, dy_2^L\,  dx
\\[1ex] &\leq
c'_d \d_t^{3d+4} \bigl(\max_{u \in \d_t B_d } \vol_{d-2} ((u +L^\perp ) \cap K) \bigr)^2
\end{align*}
with
\begin{align*}
c'_d &=
\int\limits_{ B_d \times (2 B_2 )^2  \times B_d^2} 
\1([0,x\vert_L ] \cap (y_1^L + [0,z_1\vert_L]) \neq \emptyset)  \cdot %, \ 
\\
&\hspace{2.8cm}
 \1([0,x\vert_L ] \cap (y_2^L + 
[0,z_2\vert_L]) \neq \emptyset) 
dz_1\,  dz_2\, dy_1^L\, dy_2^L\,  dx.
\end{align*}
Analogously, 
$$
J^{(2)}_L (K)
\geq
c'_d \d_t^{2d+2}  \1( 2 \d_t B_2 \subset K\vert_L)
\bigl(\min_{u \in \d_t B_d } \vol_{d-2} ((u +L^\perp ) \cap K)\bigr)^2 .
$$ 
As in the proof of Lemma \ref{le:J1K} we calculate 
\begin{align*}
c'_d 
&=
\int\limits_{ B_d \times (2 B_2 )^2  \times B_d^2} 
\1([0,x\vert_L ] \cap (y_1^L + [0,z_1\vert_L]) \neq \emptyset)  \cdot %, \ 
\\
&\hspace{2.8cm}
 \1([0,x\vert_L ] \cap (y_2^L + 
[0,z_2\vert_L]) \neq \emptyset) 
dz_1\,  dz_2\, dy_1^L\, dy_2^L\,  dx 
\\ &=
\kappa_{d-2}^3  \int\limits_{ B_{2}^3} 
V_2( [0,x^L ] + [0,-z_1^L]) V_2( [0,x^L ] + [0,-z_2^L])
(1- \|z_1^L\|)^{\frac {d-2}2} \cdot 
\\
&\hspace{2.8cm} (1- \|z_2^L\|)^{\frac {d-2}2} (1- \|x^L\|)^{\frac {d-2}2}
dz_1^L\,  dz_2^L\, dx^L .
\end{align*}
Changing to polar coordinates gives
\begin{align*}
c'_d &=
\kappa_{d-2}^3  \int\limits_{ S_1^3}\int\limits_{ [0,1]^3} 
|{\rm det} (r_1 u_1 , r_2 u_2) | \ |{\rm det} (r_1 u_1 , r_3 u_3) |
(1- r_1)^{\frac {d-2}2}  \cdot 
\\
&\hspace{2.8cm} (1- r_2)^{\frac {d-2}2} (1- r_3)^{\frac {d-2}2}
r_1 r_2 r_3 \, dr_1 \, dr_2 \, dr_3 \, du_1 \, du_2 \, du_3 
\\ &=
\kappa_{d-2}^3  \int\limits_{ [0,1]^3} r_1^3 r_2^2 r_3^2
(1- r_1)^{\frac {d-2}2} (1- r_2)^{\frac {d-2}2} (1- r_3)^{\frac {d-2}2} \, dr_1 \, dr_2 \, dr_3 
\cdot \\
&\quad
\int\limits_{ S_1^3} |{\rm det} (u_1 ,u_2) | \ |{\rm det} (u_1 ,u_3) |\, du_1 \, du_2 \, du_3 
\\ &=
\kappa_{d-2}^3  \, {\bf B}\!\left( 3,\frac d2 \right)^2  \, {\bf B}\!\left( 4,\frac d2 \right)
\int\limits_0^{2\pi}
(\cos(\alpha))^2 \, d\alpha  
\\ &=
\pi \kappa_{d-2}^3  \, {\bf B}\!\left( 3,\frac d2 \right)^2  \, {\bf B}\!\left( 4,\frac d2 \right).
\end{align*}
This  proves the lemma.
\end{proof}

\begin{proposition}\label{prop:Int2-rcr}
Let $v$ be a point in $W\subset \R^d$. Then for
\begin{eqnarray*}\label{defI2_W(v)}
I^{(2)}_{W,L}(v)
&=&
 \int\limits_{W^5} 
\1([v,v_2]\vert_L \cap [v_3,v_4]\vert_L \neq \emptyset, \,  \| v-v_2\| \leq \d_t,\ \| v_3-v_4\| \leq \d_t)
\\[-3ex]&&  \hskip7mm 
\1([v,v_2]\vert_L \cap [w_3,w_4]\vert_L \neq \emptyset, \, 
\| w_3-w_4\| \leq \d_t)
\ d v_2 d v_3 d v_4 d w_3 d w_4 
\end{eqnarray*}
it holds that
\begin{equation}\label{eq:lim-I2W}
\lim_{\d_t \to 0} \ \frac{I^{(2)}_{W,L}(v)}{\d_t^{3d+4} } 
= c'_d 
\vol_{d-2} ((v +L^\perp ) \cap W)^2 
\end{equation}
as $\d_t \to 0$, with $c'_d$ given in \eqref{def:cd2}.
Further we have 
$ I^{(2)}_{W,L}(v) \leq c'_d \d_t^{3d+4} M_{d-2}^2 (W)  $ 
where $M_{d-2}(W)$ is the volume of the maximal $(d-2)$-dimensional section of $W$.  
\end{proposition}

\begin{proof}
We substitute $v_2= v + x $, $v_3 = v + y_1$ and $v_4 = v + y_1+ z_1$, and  $w_3 = v + y_2$
 and $w_4 = v+y_2+z_2 $, respectively. We obtain
\begin{eqnarray*}
I^{(2)}_{W,L}(v) & = &
\int\limits_{\substack{(W-v)^3 \\\times(W-v-y_2) \\\times(W-v-y_1)}} 
\1([0,x]\vert_L \cap (y_1+[0,z_1])\vert_L \neq \emptyset,
\, \|x\| \leq \d_t,\ \|z_1\| \leq \d_t)
\\[-7ex]&&  \hskip17mm 
\1([0,x]\vert_L \cap (y_2 +[0,z_2])\vert_L \neq \emptyset, 
\, \| z_2\| \leq \d_t) \, 
 d z_1 d z_2 d y_1 d y_2 dx 
\\[5ex] & \leq &
\int\limits_{(\d_t B_d) \times (W-v)^2 \times (\d_t B_d)^2} 
\1([0,x]\vert_L \cap (y_1+[0,z_1])\vert_L \neq \emptyset) 
\\[-3ex]&& \hskip33mm 
 \1([0,x]\vert_L \cap (y_2 +[0,z_2])\vert_L 
\neq \emptyset) 
 d z_1 d z_2 d y_1 d y_2 dx 
\\[1ex] &=& 
J_L^{(2)} (W-v)
. 
\end{eqnarray*}
Analogously to the calculations in Proposition \ref{prop:Int-rcr}, the expectation is also bounded 
from below by
$$
J_L^{(2)} (W_{- \d_t}-v)
. $$
Together with Lemma \ref{le:J2K} this shows that the expectation equals
\begin{equation}\label{eq:Int2rcr}
 c'_d \d_t^{3d+4} \vol_{d-2}(L^\perp \cap W)^2 (1+\co(1)) . 
\end{equation}

\end{proof}

\subsection{Projecting the RGG on a Fixed Plane}
In the next sections we fix an arbitrary two-dimensional plane $L$ and investigate $\rcr (G_L)$ by projecting on this fixed plane.

\subsubsection{The Expectation of the Rectilinear Crossing Numbers}\label{sec:Ecr}
The expectation of the rectilinear crossing numbers with respect to the underlying Poisson point process can be computed using the  Slivnyak-Mecke formula \eqref{eq:Mecke} and by some integralgeometric investigations.

\begin{theorem}\label{th:exp-rcr}
Let $ G_L$ be the projection of the RGG onto a two-dimensional plane $L$. Then, 
as $t \to \infty$ and $\d_t \to 0$,
$$
 \E_{V} \rcr (G_L)
= \frac 18 c_d \,  t^4\d_t^{2d+2}  \,  I^{(2)}_{W}(L) + \co(\d_t^{2d+2} t^4) ,
$$ 
where $I^{(2)}_{W}(L):= \int_{W\vert_L} \vol_{d-2} ((v +L^\perp ) \cap W)^2  \, dv $ and $c_d$ is defined in \eqref{def:cd1}.
\end{theorem}

\begin{proof}
Recalling the definition \eqref{defI_W(v)}, the Slivnyak-Mecke formula \eqref{eq:Mecke} yields 
\begin{equation*}
	\E_{V} \rcr (G_L) =\frac 18 \ t^4 \int\limits_W I^{(1)}_{W,L}(v_1) \;dv_1.
\end{equation*}
Now we apply \eqref{eq:lim-IW} in Proposition \ref{prop:Int-rcr}.
Using the dominated convergence theorem of  Lebesgue and 
Fubini's theorem we obtain
\begin{align*}
\lim_{t \to \infty} \frac{ \E_V \rcr (G_L)}{t^4 \d_t^{2d+2} }
& =  
\frac 18 c_d \int\limits_{W} \vol_{d-2} ((v_1 +L^\perp ) \cap W)  \, dv_1 
\\ & =  
\frac 18 c_d \int\limits_{W\vert_L} \int\limits_{L^\perp} \vol_{d-2} ((v_1 +L^\perp ) \cap W)  \, dv^{L^\perp}_1 
 dv^L_1 
\\ & =  
\frac 18 c_d \int\limits_{W\vert_L} \vol_{d-2} ((v_1^L +L^\perp ) \cap W)^2  \, dv^{L}_1 .\qedhere
\end{align*}
\end{proof}

For unit-disk graphs, i.e., $d=2$, the choice of $L$ is unique and the projection superfluous.
There the expected crossing number is asymptotically
$  \frac {c_2}8\,  t^4 \d_t^{6} $ and thus of order 
$\Theta( {m^3}/{n^2} )$
which is asymptotically optimal as witnessed by the crossing lemma.
In general, the expectation is of order
$$ 
t^4 \d_t^{2d+2} 
=\Theta\left(
%\frac{(t^2 \d_t^d)^{(2+\frac 2d)}} {t^{\frac 4d}}=
%\frac{m^{2+ \frac 2d}}{n^{\frac 4d}}=
\frac{m^3}{n^2} \left( \frac{m}{n^2} \right)^{\frac {2-d}d} \right).
$$
The extra factor $m/n^2$ can be understood as the probability that two vertices are connected via an 
edge,
thus measures the ``density'' of the graph.

\subsubsection{The Variance of the Rectilinear Crossing Numbers}\label{sec:var-rcr}
The method to calculate the variance of U-statistics is by expanding the second moment in a suitable way and using the Slivnyak-Mecke formula. This leads to various sums over $k$-tuples of points where the asymptotic dominating summands have to be identified.
For simplicity we concentrate on the case $d \geq 3$ since in the two-dimensional case the number of dominating terms is larger (although the computations are similar).
\begin{theorem}\label{th:var-rcr}
Let $ G_L$ be the projection of an RGG in $\R^d$, $d \geq 3$, onto a two-dimensional plane 
$L$. Then, as $t \to \infty$ and $\d_t \to 0$,
\begin{equation}\label{eq:var-rcr}
\lim_{t \to \infty} 
\frac{\V_V \rcr (G_L)}{t^7 \d_t^{4d+4}} 
= 
\frac 18 \left(2 c_d^2 + c'_d \lim_{t\to \infty} \frac 1{t\d_t^{d}}\right) I^{(3)}_{W}(L) 
= 
\frac 18 c''_d I^{(3)}_{W}(L) 
\end{equation}
where $I^{(3)}_{W}(L):=\int_{W\vert_L}  \vol_{d-2} ((v +L^\perp ) \cap W)^3\, dv $, $c_d$ is defined in \eqref{def:cd1} and $c'_d$ in \eqref{def:cd2}.
\end{theorem}

\begin{proof}
For a moment we use the abbreviation 
$$f(v_1, \dots, v_4 )= \1([v_1,v_2]\vert_L \cap [v_3,v_4]\vert_L \neq \emptyset,\,
	  \| v_1-v_2\| \leq \d_t,\ \| v_3-v_4\| \leq \d_t). $$ 
The second moment of the crossing number is given by
\begin{equation} \label{eq:E-cross^2}
\E_V  \rcr (G_L)^2 = 
\frac 1{64}
\E_{V} \Big( \sum_{(v_1, \dots, v_4) \in V_{\neq}^4} f(v_1, \dots, v_4 ) \Big)^2 .
\end{equation}
The square of the sum yields the sum over two 4-tuples $(v_1, \dots v_4)$ and $(w_1, \dots , w_4)$ which may or may not overlap. The sum over those 4-tuples which are disjoint yields by the Slivnyak-Mecke formula $(\E _V \rcr (G_L))^2$. Thus by the definition of the variance
\begin{equation} \label{eq:Var-cross}
\V_V  \rcr (G_L)^2 = 
\frac 1{64} \E_{V} 
\sum_{\substack{(v_1, \dots, v_4) \in V_{\neq}^4 \\ (w_1, \dots, w_4) \in V_{\neq}^4\\ 
| \{ v_1, \dots, v_4\} \cap \{w_1, \dots, w_4\}| \geq 1  }} 
f(v_1, \dots, v_4 ) f(w_1, \dots, w_4 ) .
\end{equation}
We start by computing the case when only one point coincides, that is, $| \{ v_1, \dots, v_4\} \cap \{w_1, \dots, w_4\}| = 1$. There are 16 possibilities to choose two points from each set. 
\begin{eqnarray*}  
\frac 1{64} \E_{V} \lefteqn{
\sum_{\substack{(v_1, \dots, v_4) \in V_{\neq}^4 \\ (w_1, \dots, w_4) \in V_{\neq}^4\\ 
| \{ v_1, \dots, v_4\} \cap \{w_1, \dots, w_4\}| = 1  }} 
f(v_1, \dots, v_4 ) f(w_1, \dots, w_4 ) 
= }&&
\\ &=&
\frac 1{4} \E_{V} 
\sum_{(v,v_2 \dots, v_4,w_2, \dots, w_4) \in V_{\neq}^7} 
f(v, v_2, \dots, v_4 ) f(v,w_2, \dots, w_4 ) 
\\ &=&
\frac 1{4} t^7 
\int_W \int_{W^6} f(v, v_2, \dots, v_4 ) f(v,w_2, \dots, w_4 ) dv_2 \dots dw_4\, dv
\end{eqnarray*}
and the inner integral equals $I^{(1)}_{W,L}(v)^2$. Now we apply again \eqref{eq:lim-IW} in Proposition \ref{prop:Int-rcr}. Using the dominated convergence theorem of  Lebesgue and 
Fubini's theorem we obtain
\begin{eqnarray*}  
\frac 1{64} \E_{V} \lefteqn{
\sum_{\substack{(v_1, \dots, v_4) \in V_{\neq}^4 \\ (w_1, \dots, w_4) \in V_{\neq}^4\\ 
| \{ v_1, \dots, v_4\} \cap \{w_1, \dots, w_4\}| = 1  }} 
f(v_1, \dots, v_4 ) f(w_1, \dots, w_4 ) 
= }&&
\\ &=&
\frac 1{4} c_d^2  t^7 \d_t^{4d+4}
\int_{W\vert_L} \vol_{d-2} ((v^L +L^\perp ) \cap W)^3 \, dv^L \ (1 +\co(1))
\\ &=&
\frac 1{4} c_d^2  t^7 \d_t^{4d+4} 
I^{(3)}_W(L) \ (1 +\co(1)) .
\end{eqnarray*}

For all other terms in the expectation $\E_{V }$, at least two points of the sets $\{v_2, \dots, 
v_4\}$, $\{w_2, \dots, w_4\}$ coincide. 
We first calculate the situation where two segments coincide. Since there are two segments involved in each function and each segment consists of two points there are 8 possibilities. We calculate in particular the case $v_1=w_1$ and $v_2 = w_2$. 
The Slivnyak-Mecke formula gives
\begin{eqnarray*}
\frac 1{8} \lefteqn{\E_{V } 
\sum_{v_1, \dots, v_4,w_3, w_4 \in V_{\neq}^6} f(v_1, \dots, v_4) f(v_1, v_2, w_3, w_4) } &&
%--------------------------------
\\ &=&
\frac 18 t^6 \int\limits_W \int\limits_{W^5} 
f(v,v_2\dots ,v_4) f(v, v_2, w_3, w_4)
\ d v_2 d v_3 d v_4 d w_3 d w_4 \,dv ,
\end{eqnarray*}
and the inner integral equals $I^{(2)}_{W,L}(v)^2$. This time we apply \eqref{defI2_W(v)} in Proposition \ref{prop:Int2-rcr}. Using the dominated convergence theorem of Lebesgue and 
Fubini's theorem we obtain
\begin{eqnarray*}
\frac 1{8} \lefteqn{\E_{V } 
 \sum_{v_1, \dots, v_4,w_3, w_4 \in V_{\neq}^6} f(v_1, \dots, v_4) f(v_1, v_2, w_3, w_4) } &&
%--------------------------------
\\ &=&
\frac 18 c'_d t^6 \d_t^{3d+4} \int_{W\vert_L} \vol_{d-2} ((v^L +L^\perp ) \cap W)^3 \, dv^L  \ (1 +\co(1))
\\ &=&
\frac 18 c'_d t^6 \d_t^{3d+4} I^{(3)}_{W}(L)  \ (1 +\co(1)).
\end{eqnarray*}

Similar investigations show that in the case where two points in different segments coincide we obtain terms of order 
$\cO(t^{6} \d_t^{4 d +  2})$.
The case when three points coincide yields terms of order at most $\cO(t^{5} \d_t^{3 d + 2})$. If all points of the two 4-tuples coincide we obtain the expectation of the crossing numbers which is $\cO(t^{4} \d_t^{2 d + 2})$.

Combining our results  we have
\begin{eqnarray*}
\V_{V } \rcr (G_L) 
& = & \nonumber
\frac 18 t^7 \d_t^{4d+4} (2 c_d^2 + c'_d t^{-1} \d_t^{-d}) I^{(3)}_{W}(L)  \ (1 +\co(1)) 
\\ && +
\cO(t^{6} \d_t^{4 d +  2}) + \cO(t^{5} \d_t^{3 d + 2})+  \cO(t^{4} \d_t^{2 d + 2})
\ . 
\end{eqnarray*}
Since we assume $d \geq 3$ and that $ t \d_t^d \geq c$, this leads to \eqref{eq:var-rcr}.
\end{proof}

\noindent
Theorem \ref{th:exp-rcr} and Theorem \ref{th:var-rcr} show for the standard deviation
$$
\s (\rcr (G_L)) = \sqrt{\V_V \rcr (G_L)} = \Theta(
t^4 \d_t^{2d+2} \  t^{-\frac 12} ) = \Theta(
\E_V \rcr (G_L)\ t^{-\frac 12} ) ,
$$
which is smaller than the expectation. Or, 
equivalently, the coefficient of variation 
$\frac {\s(\rcr(G_L))}{\E_V \rcr(G_L)} $
is of order $t^{- \frac 12}$.
As $t \to \infty$, our bounds on the expectation and variance together with Chebychev's inequality lead to 
\begin{eqnarray*}
\P\left(\left\vert \frac{\rcr(G_L)}{t^4 \d_t^{2d+2}}- \frac{\E_V \rcr(G_L)}{t^4 
\d_t^{2d+2}}\right\vert \geq \e \right) \leq \frac{\V_V \rcr(G_L)}{t^8 \d_t^{4d+4} \e^2}  \to 0.
\end{eqnarray*}
Recalling $I^{(2)}_{W}(L)$ defined in Theorem \ref{th:exp-rcr}, we can state
\begin{corollary}[Law of Large Numbers]\label{cor:llnrcr}
For given $L$, the normalized random crossing number converges in probability (with respect to the 
Poisson point process $V$) as $t \to \infty$,
$$\frac{\rcr(G_L)}{t^4 \d_t^{2d+2}} \ \to \ \frac 18 c_d I^{(2)}_{W}(L) .
$$
\end{corollary}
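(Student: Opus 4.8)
The plan is to combine the mean asymptotics of Theorem~\ref{th:exp-rcr} with the variance bound of Theorem~\ref{th:var-rcr} in a standard second-moment argument. Write $X_t := \rcr(G_L)/(t^4\d_t^{2d+2})$ and $\mu := \frac18 c_d\, I^{(2)}(W,L)$. First I would observe that Theorem~\ref{th:exp-rcr} gives $\E_V X_t \to \mu$ as $t\to\infty$ and $\d_t\to 0$, since dividing by $t^4\d_t^{2d+2}$ turns the error term $o(\d_t^{2d+2}t^4)$ into $o(1)$.

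Next I would control the fluctuations around this mean. Dividing the variance by the square $(t^4\d_t^{2d+2})^2 = t^8\d_t^{4d+4}$ of the normalizing factor, Theorem~\ref{th:var-rcr} (valid for $d\geq 3$) yields
\[
\V_V X_t = \frac{\V_V \rcr(G_L)}{t^8\d_t^{4d+4}} = \Th\!\left(\tfrac{1}{t}\right) \to 0 ,
\]
because the variance is of order $t^7\d_t^{4d+4}$ whereas the normalization squares to order $t^8\d_t^{4d+4}$; the surviving factor $t^{-1}$ is exactly the square of the coefficient of variation recorded just above. Only the upper bound of Theorem~\ref{th:var-rcr} is needed here, so no sharpness of the constants enters.

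Finally I would assemble these pieces via Chebyshev's inequality, precisely as displayed immediately before the statement: for any fixed $\e>0$,
\[
\P\!\left(\left\vert X_t - \E_V X_t\right\vert \geq \e \right) \leq \frac{\V_V X_t}{\e^2} \to 0 .
\]
Since $\E_V X_t \to \mu$ is a purely deterministic convergence, the triangle-inequality split $\vert X_t-\mu\vert \leq \vert X_t-\E_V X_t\vert + \vert \E_V X_t - \mu\vert$ shows that for $t$ large enough the second summand is below $\e/2$, whence $\P(\vert X_t-\mu\vert\geq\e)\leq \P(\vert X_t-\E_V X_t\vert\geq \e/2)\to 0$. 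This is exactly convergence in probability of $X_t$ to $\mu$, as claimed.

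The genuine work has in fact already been done in Theorems~\ref{th:exp-rcr} and~\ref{th:var-rcr}; the only point that must be verified in this final step is that the variance grows one power of $t$ slower than the square of the expectation, so that the normalized variance vanishes and Chebyshev closes the argument. Thus the main obstacle is not in this corollary at all, but in the integral-geometric estimates underpinning those two theorems.
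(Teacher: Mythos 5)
Your proposal is correct and follows exactly the paper's own route: Theorem~\ref{th:exp-rcr} for the mean, the upper bound of Theorem~\ref{th:var-rcr} to show the normalized variance is $O(t^{-1})$, and Chebyshev's inequality (as displayed immediately before the corollary) combined with the deterministic convergence of the normalized expectation. The only difference is that you spell out the triangle-inequality step replacing $\E_V X_t$ by its limit $\mu$, which the paper leaves implicit.
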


\subsubsection{Deviation Inequalities}

As known by the crossing lemma, the optimal crossing number is of order $\Omega(\frac {m^3} {n^2})$. In 
our setting this means that we are looking for crossing numbers of order $t^4 \d_t^{3d}$, much smaller than the expectation $\E_V \rcr (G_L) $.
Chebychev's inequality shows that it is difficult to reach this order of magnitude; using $\d_t \to 0$ in the last step we have:
\begin{align*}
\P_{V}(\rcr (G_L) \leq c t^4\d_t^{3d})
&\leq
\P_{V}\bigl(|\rcr (G_L) - \E_{ V} \rcr (G_L) | \!\geq \E_{ V} \rcr (G_L) - 
c t^4\d_t^{3d}\bigr)
\\
&\leq
\frac {\V_{V } \rcr (G_L) }{(\E_{ V } \rcr (G_L) - c t^4\d_t^{3d})^2}
=
\cO(t^{-1}).
\end{align*}

To get a more precise inequality we use a large deviation inequality from \cite{LachRei}, which follows from a general isoperimetric inequality on Poisson spaces for the convex distance, see \cite{Reitzner13, CTMRMS}. 
Let $U=U_k(f)$ be a U-statistic of order $k$, and denote by $U(v)$ the local version of the U-statistic $U$, 
$$
U(v)= \sum_{(v_2, \dots, v_{k}) \in (V\setminus\{v\})^{k-1}_{\neq}} f(v,v_2, \dots, v_{k})   
$$
so that $ U= \sum_{ v\in V} U(v) $. 

\begin{theorem}[see {\cite[Theorem 8]{LachRei}}]\label{th:LDIgen}
Let $U=U_k(f)$ be a Poisson U-statistic of order $k$, and let $\M U$ be the median of $U$.
Then for $u\geq 0$ and arbitrary $B>0$, 
 \begin{equation}
\P( U \leq \M U - u)
\leq
2 \exp\left(- \frac {u^2 }{ 4 k^2 B\,   \M U}\right)
+  2 \P(\exists v:\  U(v) >B) \,.
\end{equation}
\end{theorem}

\noindent We need to compute a uniform upper bound for $\P (U(v) > B)$ since
\begin{eqnarray*}
\P(\exists v:\  U(v ) >B) 
&\leq &
\E \sum_{v\in V } \1 (U(v) > B)
\\ & = &
t \int_{W} \P (U(v) > B) dv 
\\ & \leq &
t \max_v \P (U(v) > B).
\end{eqnarray*}
To apply this for the crossing numbers observe that in our case we investigate $U=U_4(f)= \rcr(G_L)$ with  
\begin{eqnarray*}
f(v, v_2,v_3, v_4)
&=&
\1([v,v_2]\vert_L \cap [v_3, v_4]\vert_L \neq \emptyset, \!\| v-v_2\| \leq \d_t,\ \| v_3-v_4\| \leq \d_t) 
\\ &\leq &
\1(v_2 \in B(v, \d_t),\, v_3 \in B_L(v , 2\d_t) + L^\perp,\, v_4 \in B(v_3, \d_t)) .
\end{eqnarray*}
To gain more independence we distinguish whether $B(v, \d_t)$ is disjoint from $B(v_3, \d_t)$ or not.
\begin{align*}
U(v ) = &
\ \rcr(G_L) (v)
\\ \leq\ &
\sum_{V^3_{\neq}} \1(v_2, v_3, v_4  \in B(v, 2\d_t))\\ 
&{}+ \sum_{V^3_{\neq}}  \1(v_2 \in B(v, \d_t),\, v_3 \in ((B_L(v , 2\d_t) + L^\perp)\setminus B(v, 2\d_t),\, v_4 \in B(v_3, \d_t))  .
\end{align*}
Denote by $N_1$ a Poisson distributed random variable with mean $E_1= t (2\d_t)^d \kappa_d$,
by $N_2, N_4$ independent Poisson distributed random variables with mean $E_2=E_4= t \d_t^d \kappa_d$, and
by $N_3$ an independent Poisson distributed random variable with mean $E_3=  t (2\d_t)^2 \pi M_{d-2}(W)  $, 
where $M_{d-2}(W)$ is the maximal $(d-2)$-dimensional section of $W$.
Because $\rcr(G_L)(v)$ is stochastically dominated by 
$ N_1^3 + N_2 N_3 N_4$, 
we have
\begin{eqnarray*}
\max_v \P( \rcr(G_L)(v ) >B)
& \leq &
\P( N_1^3 + N_2 N_3 N_4 >B )
\\ & \leq &
\P \left( N_1^3 > \frac 12 B \right)
 + \P \left( N_2 N_3 N_4 > \frac 12 B \right)
\\ & \leq &
\P \left( N_1 > \Bigl(\frac 12 B\Bigr)^{\frac 13} \right)
 + \prod_{i=2,3,4} \P \left( N_i > \Bigl(\frac 12 B\Bigr)^{\frac 13} \right).
\end{eqnarray*}
It remains to estimate the two summands. Starting with the estimate for the probability that 
$N_i > (\frac 12 B )^{\frac 13}$, 
we use the Chernoff bound for the Poisson distribution, namely
\begin{equation} \label{eq:chern}
\P(N_i > r) \leq \inf_{s\geq 0} e^{E_i(e^s-1)-sr} .
\end{equation}
Since 
$$
\inf_{s\geq 0} {E_i(e^s-1)-sr}
= r(1-\ln \frac r{E_i} ) -{E_i}  \leq - r 
$$
for $r \geq E_i e^2$, we obtain the estimate
\begin{eqnarray*}
\P \left(N_1^3 > \frac 12 B \right)
+ 
\prod_{i=2,3,4} \P \left(N_i > \Bigl(\frac 12 B\Bigr)^{\frac 13} \right)
&\leq &
e^{  - (\frac 12 B)^{\frac 1 3} }  
+
e^{  - 3(\frac 12 B)^{\frac 1 3} }  
\end{eqnarray*}
for $B \geq \max (2E_i^3 e^6)$.
We combine this with the general Theorem \ref{th:LDIgen},
 \begin{equation}
\P( \rcr(G_L) \leq \M \rcr(G_L) - u)
\leq
2 \exp\left(- \frac {u^2 }{ 4^3 B \, \M \rcr(G_L)}\right)
+  4 t e^{  - (\frac 12 B)^{\frac 1 3} }  
\end{equation}
for $B \geq 2 E_3^3 e^6$, because $ \max E_i= E_3$ for $\d_t$ sufficiently small. We set $ B = \frac{1}{4^2} (\frac {u^{2} }{  \M \rcr(G_L)})^{\frac{3}4} $.
\begin{theorem}\label{th:localLDI}
For $ u \geq 32^{2/3} \left( t (2 \d_t)^2 \pi M_{d-2}(W)\right)^{2} e^4 \sqrt{\M \rcr(G_L)}$ we have
\begin{eqnarray*}
\P( \rcr(G_L) \leq \M \rcr(G_L) - u) 
\leq (2+4 t)  e^{- \frac{1}{4} \left(\frac {u^{2} }{ \M \rcr(G_L)}\right)^{\frac{1}4} }  .
\end{eqnarray*}
\end{theorem}

Since the difference between median and expectation satisfies 
$$
\left| \frac {\M \rcr(G_L)}{\E \rcr(G_L)} -1\right| \leq \frac{\sqrt{\V \rcr(G_L)}}{ \E \rcr(G_L)}  = \cO(t^{- \frac 12}),
$$
we see that 
\begin{equation}\label{eq:Masymp}
 \M \rcr(G_L) = t^4 \d_t^{2d+2} I^{(2)}_{W}(L)) + \co(t^4 \d_t^{2d+2}) .  
\end{equation}
For the next theorem we set $u= \M \rcr(G_L) - c_1 t^4 \d_t^{3d} $, and take into account that in our case $ t \d_t^d \geq c$, $d >2$, and thus 
$ t^4 \d_t^{2d+2}  \geq  c t $.

\begin{theorem}\label{thm:manytries}
Let $ G_L$ be the projection of the RGG onto a two-dimensional 
plane $L$. Then there are $c_1, c_2 >0$ such that the random crossing number satisfies
$$\P\big( \rcr(G_L) \leq c_1 t^{4}\d_t^{3d}\big) 
\leq
%(2+4 t)  e^{- \frac{1}{4 k^2} \left((t^4 \d_t^{2d+2} I^{[(2)}(B) - c t^4 \d_t^{3d}) (1- c \d_t^{d-2})\right)^{\frac{1}4} }  
%\\ &\leq& e^{- c \left(t^4 \d_t^{2d+2} \right)^{\frac{1}4} } <
e^{- c_2 \sqrt[4]t  }  $$
for $t \geq 1$.
\end{theorem}

\bigskip
\subsection{Projecting the RGG on a Random Plane}

Until now, we fixed a plane $L$ and computed the variance with respect to the random points $V$.
In general, Theorem \ref{th:exp-rcr} and Theorem \ref{th:var-rcr} allow to compute the expectation 
and variance %of the crossing number 
with respect to $V$ \emph{and} the randomly chosen plane $L$. For 
the expectation we obtain from Theorem \ref{th:exp-rcr} and by Fubini's theorem
\begin{equation}\label{eq:exp-L-ppp-allg}
\E_{L, V} \rcr(G_L)
=
\frac {1}8 c_d \, t^4\d_t^{2d+2}\int\limits_{\cL} I^{(2)}_{W}(L)  \, dL\  
 + \co( t^4 \d_t^{2d+2}),
\end{equation}
as $t \to \infty$ and $\d_t \to 0$, where $dL$ denotes integration with respect to the Haar measure on~$\cL$.
We use the variance decomposition 
$  \V_{L, V} X=  \E_L \V_V X + \V_L \E_V X$. 
By
\begin{align} 
 \E_L \V_V \rcr(G_L) &=
\frac {1}{8} c''_d \, t^7 \d_t^{4d+4} 
\int\limits_{\cL} I^{(3)}_{W}(L) \, dL \ 
+\co(t^7 \d_t^{4d+4} ),\text{\quad and}\\
 \V_L \E_V \rcr(G_L) 
& = 
\E_L (\E_V \rcr(G_L))^2 -  (\E_{L,V} \rcr(G_L))^2 = \\
\frac {1}{64} c_d^2  \, t^8\d_t^{4d+4} & \left[ 
\int\limits_{\cL} I^{(2)}_{W}(L)^2 \, dL - \left(\int\limits_{\cL} I^{(2)}_{W}(L) 
dL \right)^2 
\right]
+ \co(t^8 \d_t^{4d+4} )\nonumber
\end{align}
we obtain 
\begin{align}\label{eq:totalVar}
\V_{L,V} \rcr(G_L)
=\ & 
\frac {1}{64} c_d^2\, t^8\d_t^{4d+4} \left[ 
\int\limits_{\cL} I^{(2)}_{W}(L)^2 dL - \left(\int\limits_{\cL} I^{(2)}_{W}(L) 
dL \right)^2 
\right]
\\
& + \co(t^8 \d_t^{4d+4} ) . \nonumber
\end{align}
It is not a surprise that the variance $\V_{L, V} \rcr (G_L)$ is dominated by $ \V_L \E_V \rcr(G_L) $ in general. By Theorem \ref{thm:manytries}, the crossing numbers are sharply concentrated around their mean which depends on $L$ via $I^{(2)}_{W}(L)$. Hence the variance is just dominated by the variance of the random variable $I^{(2)}_{W}(L)$ for random $L \in \cL$, 
$$ \V_{L} I^{(2)}_{W}(L) =  
\int\limits_{\cL} I^{(2)}_{W}(L)^2 \, dL - \left(\int\limits_{\cL} I^{(2)}_{W}(L) 
dL \right)^2 $$
H\"older's inequality implies that this is positive as long as $I^{(2)}_{W}(L)$ is not a 
constant function.

{\it Remark.} The integral appearing in the expectation can be rewritten using integration over $\cE^d_{d-2}$, the set of affine $(d-2)$-dimensional planes with respect to the corresponding Haar measure, which gives
$$
\int\limits_{\cL} I^{(2)}_{W}(L)  \, dL
=
\int\limits_{\cE^d_{d-2}} \vol_{d-2}(E \cap W)^2  \, dE  .
%\geq \vol_d(W)=1
$$
It would be of interest to investigate the minimizers, resp.\ extremizers, of this integral with respect to $W$.

\subsubsection{The Rotation Invariant Case}\label{sec:isotropic-rcr}
If $W$ is the ball $B$ of unit volume and thus $V$ is \emph{rotation invariant}, 
then 
$I^{(2)}_{B}(L)=I^{(2)}(B)$ is a constant function independent of $L$, and the leading term in \eqref{eq:totalVar} is vanishing.
From  \eqref{eq:exp-L-ppp-allg} we see that in this case  the expectation is independent of~$L$.
$$
\E_V \rcr(G_L) = \E_L \E_V \rcr (G_L)
= t^4  \d_t^{2d+2}  I^{(2)}(B)+ \co(t^4 \d_t^{2d+2} ) 
$$
For the variance this implies
$\V_L \E_V \rcr (G_L)=0$, and hence 
\begin{eqnarray*} 
\V_{L, V} \rcr (G_L) 
&=&
\E_L \V_V \rcr (G_L) 
=
\frac {1}{8} c''_d
\, t^7 \d_t^{4d+4} 
I^{(3)}(B) 
+\co(t^7 \d_t^{4d+4} ).
\end{eqnarray*}
In this case the variance $\V_{L,V}$ is of the order $t^{-1}$---and thus surprisingly 
significantly---smaller than in the general case.

\begin{theorem}\label{th:isotropic-rcr}
Let $ G_L$ be the projection of an RGG in the ball $B \subset \R^d$, $d \geq 3$, onto a 
two-dimensional uniformly chosen random plane 
$L$. Then
\begin{align*}
 \E_{L,V} \rcr (G_L)
&= \frac 18 c_d  \, t^4  \d_t^{2d+2}  I^{(2)}(B) 
(1 + \co(1) )\text{\quad and }\\
\V_{L,V} \rcr (G_L) 
&= \frac 1{8} c''_d\,  t^7 \d_t^{4d+4}    I^{(3)}( B)
(1 + \co(1) ),
\end{align*}
as $t \to \infty$.
\end{theorem}

Again, Chebychev's inequality immediately yields a law of large numbers which states that with 
high probability the crossing number of $G_L$ in a random direction is very 
close to 
$ \frac 18 c_d \, t^4 \d_t^{2d+2} I^{(2)}(B) $.
\begin{corollary}[Law of Large Numbers] \label{cor:llninv}
Let $ G_L$ be the projection of an RGG in $B \subset \R^d$, $d \geq 3$, onto a random two-dimensional 
plane $L$. Then the normalized random crossing number converges in probability (with respect to the 
Poisson point process $V$ and to $L$), as $t \to \infty$,
$$\frac{\rcr(G_L)}{t^4 \d_t^{2d+2}} \ \to \ \frac 18 c_d I^{(2)}(B) .
$$
\end{corollary}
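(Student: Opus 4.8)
The plan is to reproduce the Chebyshev argument already used for the preceding (fixed-$L$) law of large numbers, but now feeding in the \emph{joint} first and second moments over both the Poisson point process $V$ and the random plane $L$ that are supplied by Theorem~\ref{th:isotropic-rcr}. Write $\hat X := \rcr(G_L) / (t^4 \d_t^{2d+2})$ for the normalized quantity. The expectation part of Theorem~\ref{th:isotropic-rcr} immediately gives
$$
\E_{L,V}\hat X = \frac 18 c_d\, I^{(2)}(B) + o(1),
$$
so the mean of $\hat X$ converges to the claimed limit as $t \to \infty$.

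The crucial input is that the normalized variance tends to zero. Dividing the variance statement of Theorem~\ref{th:isotropic-rcr} by $t^8 \d_t^{4d+4}$ yields
$$
\V_{L,V}\hat X = \frac{\V_{L,V}\rcr(G_L)}{t^8\d_t^{4d+4}} = \frac 1{64} c_d^2\, t^{-1} I^{(3)}(B) + o(t^{-1}) \ \longrightarrow\ 0 .
$$
It is precisely the rotation invariance of $B$ that makes this work: since $I^{(2)}(B,L)=I^{(2)}(B)$ is constant in $L$, the decomposition $\V_{L,V} = \E_L\V_V + \V_L\E_V$ loses its second summand $\V_L \E_V \rcr(G_L) = 0$, so the total variance stays at the smaller order $t^7\d_t^{4d+4}$ rather than the order-$t^8$ behaviour of the general anisotropic case in~\eqref{eq:totalVar}. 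This gain of a factor $t^{-1}$ in the normalized variance is exactly what is needed for the argument to close.

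With both moments in hand, the conclusion is routine. For any fixed $\e>0$, Chebyshev's inequality with respect to the joint law of $(V,L)$ gives
$$
\P\!\left(\left\vert \hat X - \E_{L,V}\hat X\right\vert \geq \e\right) \leq \frac{\V_{L,V}\hat X}{\e^2} = \frac 1{64\,\e^2} c_d^2\, t^{-1} I^{(3)}(B) + o(t^{-1}) \ \longrightarrow\ 0 .
$$
Combining this with the convergence $\E_{L,V}\hat X \to \frac 18 c_d I^{(2)}(B)$ established above yields convergence of $\hat X$ to $\frac 18 c_d I^{(2)}(B)$ in probability, as claimed. There is no genuine obstacle remaining at this stage: all the analytic difficulty has already been discharged in Theorem~\ref{th:isotropic-rcr} (and in the integral-geometric estimates of the appendices that underpin it), and the only point worth emphasizing in the write-up is the vanishing of $\V_L\E_V\rcr(G_L)$ under rotation invariance, since that is what produces the decisive $t^{-1}$ decay.
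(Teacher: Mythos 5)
Your proof is correct and follows exactly the paper's route: Chebyshev's inequality applied to the joint law of $(V,L)$, using the expectation and variance asymptotics of Theorem~\ref{th:isotropic-rcr}, with the normalized variance decaying like $t^{-1}$ precisely because rotation invariance kills the $\V_L\E_V$ term in the variance decomposition. Nothing to add.
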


\subsubsection{Deviation Inequalities}
To obtain a deviation inequality for fixed $W$ and randomly chosen $L$ and thus $G_L$, recall that by \eqref{eq:Masymp}, and since $I^{(2)}_{W}(L)$ is bounded from below and above, there are constants $\underline c, \overline c >0$ depending on $W$ such that the median satisfies
$$ \underline c t^4 \d_t^{2d+2} \leq \M \rcr(G_L) \leq \overline c t^4 \d_t^{2d+2} $$
for $t \geq 1$. Theorem \ref{th:localLDI} then tells us that for $ u \geq 4 \pi  32^{2/3} e^4 M_{d-2}(W)^{2}  \sqrt{\overline c}\ t^4 \d_t^{d+3}  $ we have
\begin{eqnarray*}
\P\big( \rcr(G_L) \leq \underline c t^4 \d_t^{2d+2} - u\big) 
%\leq \P( \rcr(G_L) \leq \M \rcr(G_L) - u) 
\leq (2+4 t)  e^{- \frac{1}{4} \left(\frac {u^{2} }{ \overline c t^4 \d_t^{2d+2}}\right)^{\frac{1}4} }  .
\end{eqnarray*}

For the next theorem choose some $c_1 >0$. We set $u= \underline c t^4 \d_t^{2d+2} - c_1 t^4 \d_t^{3d} $ which is positive for $t \geq T$, and take into account that in our case $ t \d_t^d \geq c$, $d >2$, and thus 
$ t^4 \d_t^{2d+2}  \geq  c t $.

\begin{theorem}\label{thm:manytriesinL}
Choose some $c_1 >0$. Let $ G_L$ be the projection of the RGG onto a random two-dimensional 
plane $L$. Then the random crossing number satisfies
$$\P\big( \rcr(G_L) \leq c_1 t^{4}\d_t^{3d}\big) 
\leq
%(2+4 t)  e^{- \frac{1}{4 k^2} \left((t^4 \d_t^{2d+2} I^{[(2)}(B) - c t^4 \d_t^{3d}) (1- c \d_t^{d-2})\right)^{\frac{1}4} }  
%\\ &\leq& e^{- c \left(t^4 \d_t^{2d+2} \right)^{\frac{1}4} } <
e^{- c_2 \sqrt[4]t  }  $$
for $t \geq T$ where $c_2$ and $T$ depend on $c_1$ and $W$.
\end{theorem}

%\bigskip
Hence it seems to be expensive to use the following computational na\"ive approach: try to minimize the crossing numbers by just 
projecting onto a sequence of random planes until reaching the theoretical minimum of order $\cO(t^4 \d_t^{3d})$.
This suggests to combine the search for an optimal choice of the direction of projection with other 
quantities of the RGG. It is a long standing assumption in graph drawing that there is a 
connection between 
the crossing number and the stress of a graph. Therefore the next section is devoted to 
investigations concerning the stress of RGGs.

\section{The Stress of an RGG}\label{sec:stress}
According to  \eqref{def:stress} we define the stress of $G_L$ as 
$$
\st(G,G_L)
:= \frac 12 \sum_{(v_1,v_2)\in V_{\neq}^2} w(v_1,v_2) ( d_0(v_1,v_2) - d_L(v_1, v_2))^2,
%=(oder?) = \frac 12 \sum_{\{ v_1,v_2\} \in E} w(v_1,v_2) ( d_0(v_1,v_2) - d_L(v_1, v_2))^2
$$ 
where $w(v_1,v_2)$ is a positive weight-function and $d_0$ resp. $d_L$ are the distances between $v_1$ 
and $v_2$, resp $v_1|_L$ and $v_2|_L$.  
As pointed out in the introduction, typically the weight function is unbounded for $v_2 \to v_1$ but the stress itself is bounded. We need in the following that $\st$ is at least in $L^2$, which for example is guaranteed if there exists an $s >0$ such that 
\begin{equation}\label{eq:stress-bounded}
w(v_1,v_2) ( d_0(v_1,v_2) - d_L(v_1, v_2))^2 \leq s                                                                                                                                                                                                                                          \end{equation}
for all $v_1, v_2 \in W$. We will assume this throughout the paper.
In the following theorem we use the notations
%
% translationsinv??
%
\begin{align*}
 S^{(1)}_{W,L}(v)&:=\int\limits_{W-v} w(0,v_2) ( d_0(0,v_2) - d_L(0, v_2))^2 dv_2 ,\\
 S^{(1)}_{W}(L)&:=\int\limits_{W} S^{(1)}_{W,L}(v) dv \,  , \mbox{\qquad and\qquad }
 S^{(2)}_{W}(L):=\int\limits_{W} S^{(1)}_{W,L}(v)^2\,  dv.
\end{align*}

\begin{theorem}\label{th:stress}
Let $ G_L$ be the projection of the RGG in $\R^d$, $d \geq 3$, onto a 
two-dimensional plane $L$. Assume that $\st (G,G_L) \in L^2$. Then
\begin{align*} 
\E_{V} \st(G,G_L) 
&= \frac 12 \, t^2 \, S^{(1)}_{W}(L) \text{\quad and} \\
 \V_{V} \st(G,G_L) 
&=  t^3\, S^{(2)}_{W}(L) + \cO(t^2) ,
\end{align*}
\end{theorem}

\begin{proof}
As $\rcr(G)$, stress is a U-statistic, but now of order two. Using the Slivnyak-Mecke formula, it is immediate that 
\begin{eqnarray*}
\E_V \st(G,G_L) 
&=& 
\frac 12 t^2 \int\limits_{W^2} w(v_1,v_2) ( d_0(v_1,v_2) - d_L(v_1, v_2))^2 dv_1 dv_2
\\ &=& 
\frac 12 t^2 \int\limits_{W} \int\limits_{W-v} w(0,v_2) ( d_0(0,v_2) - d_L(0,v_2))^2 dv_2 dv.
\end{eqnarray*}
For the second moment we use $g(v_1, v_2)=w(v_1,v_2) ( d_0(v_1,v_2) - d_L(v_1, v_2))^2$ as an abbreviation. Then the second moment is 
given by
\begin{eqnarray*}
&&\E_V \st(G,G_L)^2 
= 
\frac 14 \E_V \sum_{\substack{(v_1,v_2)\in V_{\neq}^2\\(w_1,w_2)\in V_{\neq}^2}} g(v_1,v_2) g(w_1,w_2) 
\end{eqnarray*}
Similar to the discussion in the case of the crossing numbers, if both pairs are disjoint the expectation yields by the Slivnyak-Mecke formula $(\E_V \st(G,G_L))^2$. This implies that the variance is given by
\begin{eqnarray*}
\V_V \st(G,G_L) 
& = &  
\frac 14 \E_V \sum_{\substack{(v_1,v_2)\in V_{\neq}^2,\, (w_1,w_2)\in V_{\neq}^2\\ |\{v_1, v_2\} \cap \{ w_1, w_2\}| \geq 1}} g(v_1,v_2) g(w_1,w_2) 
\\&=&
\E_V \sum_{(v,v_2, w_2)\in V_{\neq}^3} g(v,v_2) g(v,w_2) 
\\&& + 
\frac 12 \E_V \sum_{(v_1,v_2)\in V_{\neq}^2} g(v_1,v_2)^2
\\&=&
t^3 \int\limits_W g(v, v_2) g(v, w_2) \, dv_2dw_2\, dv 
+ \cO(t^2)
\\&=&
t^3 S^{(2)}_{W}(L) + \cO(t^2)
\end{eqnarray*}
\end{proof}

The discussions from Section \ref{sec:var-rcr} and Section \ref{sec:isotropic-rcr} lead to 
analogous results for the stress of the RGG. Because the standard deviation of the stress is smaller than the 
expectation  by a factor $t^{- \frac 12}$, the stress is concentrated around its mean. 
Using Chebychev's inequality we could derive a law of large numbers. Taking expectations with 
respect to a uniform plane $L$ we obtain:
\begin{align*}
\E_{L, V} \st(G,G_L) 
&= \frac 12 t^2 \int\limits_{\cL} S^{(1)}_{W}(L) dL, \\%\text{\quad , and}\\
\V_{L,V} \st(G,G_L) 
&=
\frac 14  t^4 \left[\int\limits_\cL S^{(1)}_{W}(L)^2 dL - \left(\int\limits_\cL S^{(1)}_{W}(L) dL \right)^2  \right]
+\cO(t^3). 
\end{align*}
Again, the term in brackets is only vanishing if $W=B$.
In this case
$$
\V_{L,V} \st(G,G_L) 
= 
\E_L \V_V \st(G,G_L) 
=
\frac 14  t^3  S^{(2)}_{B}(L)  +\cO(t^2) .
$$

\section{Correlation between Crossing Number and Stress}\label{sec:correlation}

It seems to be widely conjectured that the crossing number and the stress should be positively 
correlated. %This is used in numerous applications. 
Yet it also seems that a rigorous proof is still 
missing. It is the aim of this section to provide the first proof of this conjecture, in the case 
where the graph is a random geometric graph. 

\subsection{Projecting on a Fixed Plane}
Clearly, by the definition of $\rcr$ and $\st$ we have 
$$
D_v \, \rcr(G_L) \geq 0 \text{ and } D_v \, \st (G, G_L) \geq 0,  $$
for all $v$ and all realizations of $V$. Such a functional $F$ satisfying  $D_v (F) \geq 0$ 
is called increasing. 
The Harris-FKG inequality for Poisson point processes \cite{LastPenrose11} links this fact to the correlation of 
$\rcr(G_L)$ and $\st(G,G_L)$.
\begin{theorem}
Assume that $\st (G,G_L) \in L^2$.  Because $\st$ and $\rcr$ are increasing, we have 
$$\E_V \rcr(G_L) \st (G,G_L) \geq 
\E_V \rcr(G_L) \, \E_V \st (G,G_L) ,
$$
and thus the correlation is positive.
\end{theorem}
We immediately obtain that the covariance is positive and is of order at most
\begin{eqnarray}\label{eq:cov-upperbound}
\cov_V \bigl(\rcr(G_L), \st (G_L) \bigr)
&\leq& \sqrt{\V_V \rcr(G_L)\, \V_V \st (G,G_L) } 
\nonumber 
\\ & & =  \nonumber
\cO( t^{5} \d_t^{2d+2})
\end{eqnarray}
The following theorem proves that this is indeed the correct order.
\begin{theorem}\label{thm:cov}
Let $ G_L$ be the projection of an RGG in $\R^d$, $d \geq 3$, onto a 
two-dimensional plane $L$. Assume as in \eqref{eq:stress-bounded} that the stress is bounded. Then 
$$
\cov_V  \bigl(\rcr (G_L), \st(G,G_L)\bigr) =
\frac 1{2}  c_d  t^5 \d_t^{2d+2} \int\limits_{W} \vol_{d-2} ((v +L^\perp ) \cap W)   S^{(1)}_{W,L}(v) dv \, (1+\co(1))
$$
as $t \to \infty$.
\end{theorem}

\begin{proof}
We again use the abbreviations 
$g(v_1, v_2)=w(v_1,v_2) ( d_0(v_1,v_2) - d_L(v_1, v_2))^2$
and 
$f(v_1, \dots, v_4 )= \1([v_1,v_2]\vert_L \cap [v_3,v_4]\vert_L \neq \emptyset,\, \| v_1-v_2\| \leq \d_t,\ \| v_3-v_4\| \leq \d_t)$.
The covariance is given by
\begin{eqnarray*}
\cov_V  \bigl(\rcr (G_L), \st(G,G_L)\bigr) 
&=& 
\frac 1{16} \E_{V} 
\sum_{\substack{(v_1, \dots, v_4) \in V_{\neq}^4 ,\ (w_1, w_2) \in V_{\neq}^2\\ 
| \{ v_1, \dots, v_4\} \cap \{w_1, w_2\}| \geq 1  }} 
f(v_1, \dots, v_4 ) g(w_1, w_2 ) 
\end{eqnarray*}
since the terms where $(v_1, \dots, v_4)$ is disjoint from $(w_1, w_2)$ by the Slivnyak-Mecke formula yield 
$\E_V \rcr(G_L) \ \E_V \st(G,G_L)$. In the case when both terms have one point in common we have 
\begin{eqnarray*}
\frac 1{16} \E_{V} 
\lefteqn{ \sum_{\substack{(v_1, \dots, v_4) \in V_{\neq}^4 ,\ (w_1, w_2) \in V_{\neq}^2\\ 
| \{ v_1, \dots, v_4\} \cap \{w_1, w_2\}| = 1  }} 
 f(v_1, \dots, v_4 ) g(w_1, w_2 ) }&&
\\&=& 
\frac 1{2} \E_{V} 
\sum_{(v,v_2, \dots, v_4,w_2) \in V_{\neq}^5 } 
f(v,v_1, \dots, v_4 ) g(v, w_2 ) 
\\&=& 
\frac 1{2} t^5 \int\limits_{W} I^{(1)}_{W,L}(v)  S^{(1)}_{W,L}(v) \, dv 
\\&=& 
\frac 1{2}  c_d  t^5 \d_t^{2d+2} \int\limits_{W} \vol_{d-2} ((v +L^\perp ) \cap W)   S^{(1)}_{W,L}(v) \, dv (1+\co(1))
\end{eqnarray*}
by Proposition \ref{prop:Int-rcr}.
For the terms where $w_1, w_2 \in \{ v_1, \dots , v_4\}$ we use that by assumption \eqref{eq:stress-bounded} the stress is bounded by some $s>0$. The Slivnyak-Mecke formula gives
$$
\frac 1{16} \E_{V} 
\sum_{\substack{(v_1, \dots, v_4) \in V_{\neq}^4 \\ \{w_1, w_2\} \subset \{v_1, \dots, v_4\}  }} 
f(v_1, \dots, v_4 ) g(w_1, w_2 )
 \leq  
\frac s4 \E \rcr (G_L) 
=
\cO(t^4 \d_t^{2d+2}) .
$$

\end{proof}

As an immediate consequence we see that the correlation coefficient between the crossing numbers and the stress is bounded away from zero and satisfies
\begin{eqnarray*}
\lim_{t \to \infty} \lefteqn{\cor_V (\rcr(G_L),\st(G,G_L))
} && \\
&=&
\frac{ 
c_d  \int_{W} \vol_{d-2} ((v +L^\perp ) \cap W)   S^{(1)}_{W,L}(v) dv
}{ \sqrt{
\frac 12 c''_d I^{(3)}_{W}(L)   S^{(2)}_{W}(L)
}}
\ .
\end{eqnarray*}
Observe that by H\"older's inequality we always have 
$$ \int_{W} \vol_{d-2} ((v +L^\perp ) \cap W)   S^{(1)}_{W,L}(v) dv
< \sqrt{ I^{(3)}_{W}(L)   S^{(2)}_{W}(L) }
$$
and thus the correlation coefficient is bounded away from one.

\subsection{Projecting on a Random Plane}
The bounds for the covariance in the Poisson point process $V$ given above can be 
used to compute covariance bounds in $L$ and $V$ when $L$ is not fixed but random. For the sake of simplicity we concentrate on the rotation invariant case when $W=B$. As an immediate consequence of Theorem \ref{thm:cov} we obtain
\begin{corollary}\label{cor:corr-VL-isotropic}
Let $ G_L$ be the projection of an RGG in $B \subset \R^d$, $d \geq 3$, onto a 
two-dimensional random plane $L$. Note that by rotation invariance we have $I^{(3)}_{B}(L)=I^{(3)}(B)$ and $S^{(2)}_{B}(L)=S^{(2)}(B)$. Then the correlation between the crossing number and the stress 
of the RGG is positive with
$$
\lim_{t \to \infty} \cor_V (\rcr(G_L),\st(G,G_L))
=
\frac{ 
c_d  \int_{B} \vol_{d-2} ((v +L^\perp ) \cap B)   S^{(1)}_{B,L}(v) dv
}{ \sqrt{
\frac 12 c''_d I^{(3)}(B)   S^{(2)}(B)
}}
$$
\end{corollary}
In particular, the correlation does not vanish as $t \to \infty$. This gives the first proof 
we are aware of, that there is a strict positive correlation between the crossing number and the 
stress of a graph. Hence, at least for rotation invariant RGGs, the method to optimize the stress to obtain good crossing 
numbers can be supported by rigorous mathematics.

\section{Binomial Input}\label{sec:dePoissonize}

Up until now, we preferred to work within the setting of a Poisson point process 
because of the strong mathematical tools that are available in this case. 
However, it is straightforward that the above questions can similarly be investigated in the binomial setting.  As an example we work this out for the computation of the expected crossing numbers in Theorem~\ref{th:exp-rcr}.

Let $V_n $ be a set of $n$ random points uniformly chosen in $W$. Define the graph $G$ by putting an edge between two points  if their distance is bounded by $\d_n$. Here, according to the definitions in Section \ref{sec:tools}, we assume that we are either in the thermodynamic or in the dense regime, which means 
$\lim_{n\to\infty} n \, \d_n^n=c >0 $ or $ \lim_{n\to\infty} n \, \d_n^d=\infty $.
	
We compute the expectation of the rectilinear crossing numbers when projecting on a plane $L$, and use the notation $n_{(k)} = n \dots (n-k+1)$.
By definition,
\begin{eqnarray*}
\lefteqn{\E_{V_n} \rcr (G_L)} \\
&=&
\frac 18 \, \E \sum_{(v_1, \dots, v_4) \in V_{n,\neq}^4} \!\!
\1([v_1,v_2]\vert_L \cap [v_3,v_4]\vert_L \neq \emptyset)  
\1(\| v_1-v_2\| \leq \d_n,\ \| v_3-v_4\| \leq \d_n)
\\ &=&
\frac {n_{(4)}}8  \int\limits_{W^4} \1([v_1,v_2]\vert_L \cap [v_3,v_4]\vert_L \neq \emptyset)  
\1(\| v_1-v_2\| \leq \d_t,\ \| v_3-v_4\| \leq \d_t) \, dv_1 \cdots dv_4
\\ &=&
\frac {n_{(4)}}8 \int\limits_{W} I^{(1)}_{W,L}(v_1)\, dv_1 
\\ &=&
\frac {n^4}8 \d_n^{2d+2} c_d I^{(2)}_{W}(L) (1+\co(1)) 
\end{eqnarray*}
applying Proposition \ref{prop:Int-rcr} in the last step. This is the analogue to Theorem \ref{th:exp-rcr} in the binomial setting.
The analogue to \eqref{eq:exp-L-ppp-allg} would be
$$ \E_{L, V_n} \rcr (G_L)
=
\frac {n^4}8 \d_n^{2d+2} c_d \int\limits_{\cL} I^{(2)}_{W}(L) dL (1+\co(1)). 
$$
We leave further computations to the interested reader.

\section{Conclusion}

Apart from providing precise asymptotics for the crossing numbers of drawings of random 
geometric graphs, the main findings are the positive covariance and the non-vanishing correlation between the stress and the 
crossing number of the drawing of a random geometric graph. 
Of interest would be whether $\cov_{L} (\rcr(G_L), \st (G, G_L) ) > 0$ for arbitrary graphs $G$. 
There are simple examples of graphs $G$ where this is wrong.
However, we could ask in a slightly weaker form whether at least 
$ \E_V \cov_{L} (\rcr(G_L), \st (G_L) ) >0 ,$
but we have not been able to prove that.

%%%%%%%%%%%%%%%%%%%%%%%%%%%%%%%%%%%%%%%%%%%%%%%%%%%%%%%%%%%%%%%%%%%%%%%%%%%%%%%%%%%%%%%%%%%%%%
%% up to here at most 12 pages. reference re not counted towards the page limit

\newpage

\bibliography{random-rcr-numbers}

%%%%%%%%%%%%%%%%%%%%%%%%%%%%%%%%%%%%%%%%%%%%%%%%%%%%%%%%%%%%%%%%%%%%%%%%%%%%%%%%%%%%%%%%%%%%%%%%%%%%
%%%%%
%\newpage

%\appendix

\end{document}